\providecommand{\customgenericname}{}
\newcommand{\newcustomtheorem}[2]{%
  \newenvironment{#1}[1]
  {%
   \renewcommand\customgenericname{#2}%
   \renewcommand\theinnercustomgeneric{##1}%
   \innercustomgeneric
  }
  {\endinnercustomgeneric}
}
\newcounter{theonew4}
\newtheorem{def2}[theonew4]{Definition}
\def\qed{{\hfill\hbox{\rlap{$\sqcap$}$\sqcup$}}}
\newcommand{\keywords}[1]{\par\addvspace\baselineskip
\noindent\keywordname\enspace\ignorespaces#1}
\begin{document}

\mainmatter  
\title{Total tessellation cover and quantum walk
\thanks{This work was partially supported by the Brazilian agencies CAPES, CNPq and FAPERJ.}}

\titlerunning{\textit{Total tessellation cover and quantum walk}}

\author{A.~Abreu\inst{1}
\and L.~Cunha\inst{1}\and C.~de Figueiredo\inst{1}\\ F.~Marquezino\inst{1}\and D.~Posner\inst{1}\and R.~Portugal\inst{2}}
\authorrunning{A. Abreu et al.}
\institute{Universidade Federal do Rio de Janeiro -- \email{\{santiago, lfignacio, celina, franklin, posner\}@cos.ufrj.br} \and Laborat\'orio Nacional de Computa\c{c}\~ao Cient\'{\i}fica -- \email{portugal@lncc.br}}

\toctitle{Lecture Notes in Computer Science}
\tocauthor{Alexandre S. de Abreu, Lu\'is Felipe I. Cunha, Celina M.H. de Figueiredo, Franklin de L. Marquezino, Daniel F.D. Posner, Renato Portugal}
\maketitle

\begin{abstract}

We propose the total staggered quantum walk model and the total tessellation cover of a graph. This model uses the concept of total tessellation cover to describe the motion of the walker who is allowed to hop both to vertices and edges of the graph, in contrast with previous models in which the walker hops either to vertices or edges.
We establish bounds on $T_t(G)$, which is the smallest number of tessellations required in a total tessellation cover of $G$. 
We highlight two of these lower bounds $T_t(G) \geq \omega(G)$ and $T_t(G)\geq is(G)+1$, where $\omega(G)$ is the size of a maximum clique and $is(G)$ is the number of edges of a maximum induced star subgraph.
Using these bounds, we define the good total tessellable graphs with either $T_t(G)=\omega(G)$ or $T_t(G)=is(G)+1$.
The \textsc{$k$-total tessellability} problem aims to decide whether a given graph $G$ has $T_t(G) \leq k$.
We show that \textsc{$k$-total tessellability} is in $\mathcal{P}$ for good total tessellable graphs.
We establish the $\mathcal{NP}$-completeness of the following problems when restricted to the following classes: \textsc{($is(G)+1$)-total tessellability} for graphs with $\omega(G) = 2$; \textsc{$\omega(G)$-total tessellability} for graphs $G$ with $is(G)+1 = 3$; \textsc{$k$-total tessellability} for graphs $G$ with $\max\{\omega(G), is(G)+1\}$ far from $k$; and \textsc{$4$-total tessellability} for graphs $G$ with $\omega(G) = is(G)+1 = 4$. As a consequence, we establish hardness results for bipartite graphs, line graphs of triangle-free graphs, universal graphs, planar graphs, and $(2,1)$-chordal graphs.


\keywords{Graph tessellation, Quantum walk, Graph coloring, Computational complexity.}
\end{abstract}
\section{Introduction}\label{sec:one}

A \textit{tessellation} of a graph $G=(V,E)$ is a partition of $V$ into vertex disjoint cliques called \textit{tiles}. 
A \textit{$k$-tessellation cover} of $G$ is a set of $k$ tessellations that covers $E$. The \emph{tessellation cover number} $T(G)$ of a graph $G$ is the size of a minimum tessellation cover. 
The \textsc{$k$-tessellability} problem aims to decide whether a given graph $G$ has $T(G) \leq k$. The concept of tessellations on graphs was introduced in~\cite{PSFG16}.
See \cite{BkWest} for basic definitions and notations in graph theory.

\begin{def2}
\emph{Let $G = (V, E)$ be a graph and $\Sigma$ a non-empty label set. 
A \emph{total tessellation cover} comprises a proper vertex coloring and a tessellation cover of $G$ both with labels in $\Sigma$ such that, for any vertex $v\in V$, there is no edge $e\in E$ incident to $v$ so that $e$ belongs to a tessellation with label equal to the color of~$v$.}
\end{def2}


An alternative way to characterize a tessellation is by describing the edges that belong to the tessellation. 
A $k$-tessellation cover of $G = (V, E)$ is a function $h$ that assigns to each edge of $E$ a nonempty subset in $\mathcal{P}(\Sigma)$, where 
$\Sigma = \{1, \ldots, k\}$, such that the set of edges having the same label corresponds to a tessellation, i.e., induces a partition of $V$ into cliques.  
A \textit{$k$-total tessellation cover} of a graph $G$ simultaneously assigns labels in $\Sigma$
to $V$ as a proper vertex coloring $f$ and labels in $\mathcal{P}(\Sigma)\setminus \emptyset$ to $E$ as a tessellation cover with function $h$, such that each $uv \in E$ satisfies $f(u) \not\in h(uv)$ and $f(v) \not\in h(uv)$.

\begin{def2}
\emph{The \emph{total tessellation cover number} $T_t(G)$ of a graph $G$ is the minimum size of the set of labels $\Sigma$ for which $G$ has a total tessellation cover. 
The \textsc{$k$-total tessellability} problem aims to decide whether a given graph $G$ has $T_t(G) \leq k$.}
\end{def2}

\subsubsection*{Motivation.}
The quantum computation paradigm has gained popularity due to the recent advances in the physical implementation and in the development of quantum algorithms. There is an important concept, known as quantum walk, which is the mathematical modeling of a walk of a particle on a graph. This concept provides a powerful tool in the development of quantum algorithms~\cite{portugal2013quantum}. Indeed, in the last decades the interest in quantum walks has grown considerably since quantum algorithms that outperform their classical counterparts employ quantum walks~\cite{Ven12, ambainis2004quantum}. 
In 2016, Portugal et al. proposed the staggered quantum walk model~\cite{PSFG16}, which is more general than the previous quantum walk models~\cite{Por16} by containing the Szegedy model~\cite{Szegedy:2004} and part of the flip-flop coined model~\cite{ portugal2013quantum}. 
The staggered quantum walk employs the concept of graph tessellation cover to obtain local unitary matrices such that their product results in the evolution operator for the quantum walk. There is a recipe to obtain a local unitary matrix from a tessellation. The staggered model requires at least two tessellations (corresponding to $2$-tessellable graphs). In a tessellation, each clique establishes a neighborhood around which the walker can move under the action of the associated local unitary matrix. To define the evolution operator, one has to check whether the set of tessellations contains the whole edge set of the graph, since an uncovered edge would play no role in a quantum walk~\cite{PSFG16}. 



\subsubsection*{Related works.}

Abreu et al.~\cite{ArLatin, ArTCS1}   
proved that 
$\chi'(G)$ and 
$\chi(K(G))$ are upper bounds for $T(G)$, where $K(G)$ is the clique graph of $G$. 
They also proved the hardness of \textsc{$k$-tessellability} for planar graphs, $(2,1)$-chordal graphs, and $(1,2)$-graphs and showed that \textsc{$2$-tessellability} is solved in linear time. Since $T(G) = \chi'(G)$ for triangle-free graphs, \textsc{$k$-tessellability} is hard for this graph class~\cite{Kor97}.
Posner et al.~\cite{ArCNMAC} showed that \textsc{$k$-tessellalbility} is $\mathcal{NP}$-complete for line graphs of triangle-free graphs.
Abreu et al.~\cite{ArTCSGood} proved that $is(G)$ is a lower bound for $T(G)$, where $is(G)$ is the number of edges in a maximum induced star of a graph $G$. They prove the hardness of \textsc{$k$-tessellability} for universal graphs and the hardness of \textsc{good tessellable recognition}, which aims to decide whether $G$ is \emph{good tessellable}, i.e.,  $T(G) = is(G)$. 
 The concept of minimum tessellation cover was independently proposed as equivalence dimension in Duchet~\cite{duchet1979representations}, and the relation between the two concepts was described in~\cite{ArTCSGood}.

\subsubsection*{Contributions.}


This work is presented in the following sections. 
Section~\ref{sec:three} contains a study on the bounds of the value of $T_t(G)$. Such bounds describe not only the number of operators required for a total staggered quantum walk model, but they also provide tools to analyse the computational complexity of \textsc{$k$-total tessellability}, which is done in Section~\ref{sec:four}. Since $T_t(G) = \chi_t(G)$ for triangle-free graphs, the problem is hard even when restricted to bipartite graphs~\cite{mcdiarmid1994total}.
We show that \textsc{$k$-total tessellability} is in $\mathcal{P}$ for good total tessellable graphs, and as a by product \textsc{$k$-tessellability} is in $\mathcal{P}$ for good tessellable graphs. On the other hand, we show hardness results for the \textsc{$k$-total tessellability} problem for line graphs of triangle-free graphs, universal graphs, planar graphs, and $(2,1)$-chordal graphs. As a consequence, the \textsc{good total tessellability recognition} problem is $\mathcal{NP}$-complete. Note that there are few results about the hardness of \textsc{total colorability}. 
Section~\ref{sec:two} describes the total staggered quantum walk model, 
which drives a walker to hop both to vertices and edges. It also contains a description of the simulation of the total staggered quantum walk on a graph $G$ in terms of a staggered quantum walk on the total graph of $G$.
In Section~\ref{sec:five}, Table~\ref{tab:concluding} presents the behavior analysis of the computational complexity related to the following parameters: $\chi'(G), \chi_t(G), T(G)$, and $T_t(G)$.

\section{Bounds on $T_t(G)$}\label{sec:three}



Since a total coloring of a graph $G$ induces a total tessellation cover,
\begin{equation}\label{eq:bound2} 
T_t(G) \leq \chi_t(G). 
\end{equation}
Particularly, for triangle-free graphs $T_t(G)\!=\!\chi_t(G)$ because the set of edges in each tessellation of any total tessellation cover is a matching. Hence, \linebreak\textsc{($\Delta\!+\!1$)-total tessellability} is hard even when restricted to regular bipartite graphs~\cite{mcdiarmid1994total}. 
Furthermore, by definition, 
\begin{equation}\label{eq:bound1}
\max\{\chi(G), T(G)\} \leq T_t(G) \leq \chi(G)+T(G).
\end{equation}
Note that the lower bound of Eq.~(\ref{eq:bound1}) implies that $T_t(G)\geq \omega(G)$.


\begin{lemma}  If $\chi(G) \geq 3T(G)$, then $T_t(G) = \chi(G).$
\label{eq:nova1}
\end{lemma}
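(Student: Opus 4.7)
My plan is to prove the upper bound $T_t(G) \le \chi(G)$, since the reverse inequality is already contained in Eq.~(\ref{eq:bound1}). Writing $k = \chi(G)$ and $t = T(G)$ so that $k \ge 3t$, I will fix a proper $k$-coloring $f \colon V \to \Sigma = \{1,\ldots,k\}$ with color classes $V_1,\ldots,V_k$, and a tessellation cover $\{\mathcal{T}_1,\ldots,\mathcal{T}_t\}$ of $G$, and then assemble a total tessellation cover on the label set $\Sigma$ that uses $f$ as its proper vertex coloring.

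The central construction is to spread the $k$ labels across the $t$ tessellations in blocks of three. I partition the first $3t$ labels of $\Sigma$ into blocks $B_j = \{3j-2,\,3j-1,\,3j\}$ for $j = 1,\ldots,t$, and for each $\ell \in B_j$ I form a new tessellation $\mathcal{T}^{\ell}$ from $\mathcal{T}_j$ by replacing every tile $C$ of $\mathcal{T}_j$ with the tile $C \setminus V_\ell$ together with the vertices of $C \cap V_\ell$ as singletons; this remains a valid tessellation since a subset of a clique is a clique, and the new tiles partition $V$. Any remaining labels $\ell > 3t$ receive the trivial tessellation of singletons. I then set $h(e) = \{\ell : e \in E(\mathcal{T}^{\ell})\}$.

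Coverage is forced by pigeonhole: for any edge $uv$, picking $j_0$ with $uv \in E(\mathcal{T}_{j_0})$, the three labels of $B_{j_0}$ cannot all be blocked by the two colors $\{f(u), f(v)\}$, so some $\ell^* \in B_{j_0} \setminus \{f(u), f(v)\}$ exists; neither $u$ nor $v$ is removed when forming $\mathcal{T}^{\ell^*}$, hence $u$ and $v$ stay together in the tile $C \setminus V_{\ell^*}$ and $\ell^* \in h(uv)$. For the compatibility condition of a total tessellation cover, if $f(v) = \ell$ then $v \in V_\ell$ is a singleton of $\mathcal{T}^{\ell}$, so $\ell \notin h(e)$ for any edge $e$ incident to $v$. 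This gives $T_t(G) \le k = \chi(G)$. I do not anticipate any substantive obstacle: the only conceptual point is that each $\mathcal{T}^{\ell}$ is allowed to use large cliques (rather than merely matchings, as a total coloring would require), and the factor $3$ in the hypothesis $\chi(G) \ge 3T(G)$ is precisely the budget needed for the pigeonhole---three labels per original tessellation against at most two endpoint colors forbidden per edge.
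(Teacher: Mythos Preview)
Your proof is correct and follows essentially the same construction as the paper: triple each tessellation $\mathcal{T}_j$ into tessellations labeled $3j-2,3j-1,3j$, where tessellation $\ell$ omits the edges touching color-$\ell$ vertices (equivalently, splits off those vertices as singletons). Your write-up is in fact more explicit than the paper's, spelling out the tile-level description, the pigeonhole argument for coverage, and the singleton observation for compatibility, all of which the paper leaves implicit.
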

\begin{proof}
Let $f$ be a proper vertex coloring and $\mathcal{C} = \{\mathcal{T}_1, \mathcal{T}_2, \ldots, \mathcal{T}_{T(G)}\}$ be a $T(G)$-tessellation cover for $G$. We define $\mathcal{C}'$ a tessellation cover for $G$ with $3T(G)$ labels such that $\mathcal{C}'$ is compatible with $f$ as follows. Each tessellation $\mathcal{T}'_i, 1 \leq i \leq 3T(G)$, of $\mathcal{C}'$ is associated with a color $i$. Since $\chi(G) \geq 3T(G)$ there are enough colors. 

The edges of tessellations $\mathcal{T}'_{3j-2}, \mathcal{T}'_{3j-1}$, and $\mathcal{T}'_{3j}$ are given by the edges of the tessellation $\mathcal{T}_j, 1 \leq j \leq T(G)$, such that $\mathcal{T}'_{3j-2}$ (resp. $\mathcal{T}'_{3j-1}$, $\mathcal{T}'_{3j}$) consists of the edges of $\mathcal{T}_j$ that do not have an endpoint with color $3j-2$ (resp. $3j-1, 3j$).~\qed 
\end{proof}

Using an argument similar to the one in the proof of Lemma~\ref{eq:nova1}, we can rewrite the upper bound of Eq.~(\ref{eq:bound1}) as follows  \begin{equation}T_t(G) \leq \max\left\{\chi(G), T(G)+\left\lceil{2\chi(G)}/{3}\right\rceil\right\}.
\label{eq:nova2}
\end{equation}
Eq.(\ref{eq:nova2}) says that $\chi(G) \geq 3T(G)$ implies $T_t(G) = \chi(G)$, or $\chi(G)\leq 3T(G)$ implies $T(G)\leq T_t(G)\leq 3T(G)$. 
In case $\chi(G)=3$, Eq.~(\ref{eq:nova2}) implies that $T(G) \leq T_t(G)\leq T(G)+2$.
An example of a graph $G$ for which $T_t(G)=3T(G)~-~1$ and $T_t(G)>\chi(G)$ has $V(G) = \{v_1, v_2, v_3, v_4\}\cup\{u_1, u_2, u_3, u_4\}\cup\{w_1, w_2, w_3, w_4\}$, where $\{v_1, v_2, v_3, v_4\}$ and $\{u_1, u_2, u_3, u_4\}$ are maximal cliques and $\{v_i, u_i, w_i\}$  are triangles for $1\le i\le 4$. In this case $T_t(G)=5$, $\chi(G)=4$ and $T(G)=2$.
Note that $T_t(G) = \chi(G) + T(G)$ requires that $\chi(G)\leq 2$, i.e., $G$ is bipartite, which implies $T_t(G) = \chi_t(G)$ and $T_t(G)$ may assume only two values: $T_t(G) = \chi(G)+T(G) = \Delta(G) + 2$ or $T_t(G) = \chi(G)+T(G)-1=\Delta(G)+1$.

\begin{lemma}\label{eq:bound3}  
$\!\displaystyle T_t(G)\!\geq\!\!\max_{v \in V(G)}\{\chi(G^c[N(v)])\}\!+\!1\!\geq\!\!\max_{v \in V(G)}\{\omega(G^c[N(v)])\}\!+\!1\!=\!is(G)\!+\!1.$
\end{lemma}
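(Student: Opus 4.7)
The plan is to prove the three parts of the chain in sequence. The rightmost equality is essentially unpacking the definitions: a clique $S$ in $G^c[N(v)]$ is exactly an independent set $S\subseteq N(v)$ of $G$, and since $v$ is adjacent to every element of $S$ while the elements of $S$ are pairwise non-adjacent, $\{v\}\cup S$ induces a star with $|S|$ edges centered at $v$; conversely, every induced star with center $v$ arises this way. Taking the maximum over $v$ identifies $\max_v \omega(G^c[N(v)])$ with $is(G)$. The middle inequality is just $\chi(H)\geq \omega(H)$ applied to $H=G^c[N(v)]$ for each $v$.

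The content is the leftmost inequality $T_t(G)\geq \chi(G^c[N(v)])+1$ for every vertex $v$. Let $k=T_t(G)$, and fix a $k$-total tessellation cover with proper vertex coloring $f\colon V\to\Sigma$ and edge labeling $h\colon E\to \mathcal{P}(\Sigma)\setminus\{\emptyset\}$. Fix an arbitrary vertex $v$ and set $c=f(v)$. For each neighbor $u\in N(v)$, choose an arbitrary label $\ell(u)\in h(vu)$. By the definition of a total tessellation cover, $c=f(v)\notin h(vu)$, so $\ell(u)\in \Sigma\setminus\{c\}$. Thus $\ell$ uses at most $k-1$ labels on the vertex set $N(v)$.

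I claim that $\ell$ is a proper coloring of the complement graph $G^c[N(v)]$. Suppose $u_1,u_2\in N(v)$ satisfy $\ell(u_1)=\ell(u_2)=t$. Then both edges $vu_1$ and $vu_2$ carry the label $t$ in the tessellation cover, so $v,u_1$ lie in a common tile of tessellation $\mathcal{T}_t$ and $v,u_2$ also lie in a common tile of $\mathcal{T}_t$. Since $\mathcal{T}_t$ is a partition of $V$, these tiles coincide, and because tiles are cliques of $G$, any two distinct vertices in that tile are adjacent in $G$; in particular $u_1u_2\in E(G)$, i.e., $u_1u_2\notin E(G^c)$. So adjacent vertices in $G^c[N(v)]$ receive different $\ell$-labels, yielding $\chi(G^c[N(v)])\leq k-1=T_t(G)-1$. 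Taking the maximum over $v$ finishes this part and, combined with the previous two observations, the whole chain.

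The only place where something could go wrong is the key step above: one must verify that choosing labels $\ell(u)$ independently for different $u$ remains a proper coloring, which is the crux relying on the tile/clique structure of each tessellation together with $v$ acting as a common anchor. That step is the conceptual heart of the proof, and everything else is definitional bookkeeping.
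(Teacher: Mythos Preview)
Your proof is correct and follows essentially the same approach as the paper: for a fixed vertex $v$, assign to each neighbor $u$ a tessellation label carried by the edge $vu$, observe this label differs from the color of $v$, and use the tile/clique structure (with $v$ as anchor) to show that neighbors receiving the same label are adjacent in $G$, giving a proper $(k-1)$-coloring of $G^c[N(v)]$. Your write-up is slightly more explicit about the choice of $\ell(u)\in h(vu)$ when edges carry multiple labels, but the argument is the same.
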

\begin{proof}

Consider a total tessellation cover of a graph $G$, a vertex $v$ of $G$, and $G^c[N(v)]$, which is the complement graph of the graph induced by the neighborhood of $v$. 
In any tessellation, the endpoints of the edges that are incident to $v$ and belong to the tessellation induce a clique, hence the vertices of this clique are a stable set in $G^c[N(v)]$.
Therefore, the tessellations with edges incident to a vertex $v$ induce a vertex coloring of $G^c[N(v)]$, and the number of these tessellations is at least $\chi(G^c[N(v)])$. 
Moreover, these tessellations have labels that are different from the color of vertex $v$. Therefore, $T_t(G) \geq \chi(G^c[N(v)])+1$.
Note that $is(G[N[v]]) = \alpha(G[N(v)]) = \omega(G^c[N(v)])$ and $\displaystyle is(G)=\max_{v \in V(G)}{is(G[N[v]])}$.~\qed
\end{proof}

Graphs with $T_t(G)=T(G)=k$ have no induced subgraph $K_{1,k}$ because  $T_t(G)\geq is(G)+1 \geq k+1$.
Moreover, there is no tile of size $k$ in any tessellation of a total tessellation cover.
If $T_t(G)=T(G)=3$, then $G$ is $K_{1,3}$-free and there is no clique of size three in any tessellation. Therefore, the total tessellation cover of $G$ induces a total coloring of $G$, and the only graphs for which $T_t(G)=T(G)=3$ are the odd cycles with $n$ vertices such that $n\equiv 0 \mod\ 3$.
For bipartite graphs, $T(G)=\Delta(G)$ and $T_t(G) > T(G)$.
For triangle-free graphs, $T_t(G)=T(G)$ if $\chi'(G)=\chi_t(G)=\Delta+1$.
It follows that deciding whether $T_t(G)=T(G)=\Delta(G)+1$ is $\mathcal{NP}$-complete from the proof  that \textsc{($\Delta+1$)-total colorability} is $\mathcal{NP}$-complete for triangle-free snarks~\cite{ArSnark}, which are graphs with  $\chi'(G)=\Delta+1$.


\section{Good Total Tessellable Graphs}
\label{sec:four}

Since the concept of good tessellable graphs introduced in~\cite{ArTCSGood} has provided
keen insights into the hardness of finding minimum-sized tessellation covers, we define the concept of good total tessellable graphs in
order to further explore hardness results related to total tessellation covers. In the quantum computation context, we are interested in graph classes which use as few color labels as possible because the number of operators is as low as possible. In this case, $T_t(G)$ must be close to the lower bounds.

\begin{def2}
\emph{A graph $G$ is \textit{good total tessellable} if either $T_t(G) =\omega(G)$ or $T_t(G)=is(G)+1$. We say that $G$ is \textit{Type I} (resp. \textit{Type II}) if $T_t(G)=\omega(G)$ (resp. $T_t(G) = is(G)+1$).}
\end{def2}

Now we show that \textsc{$k$-total tessellability} is in $\mathcal{P}$ if we know beforehand that the graph is either good total tessellable Type~I or Type~II. 

The \textit{Lov\'{a}sz number} $\vartheta(G)$ is a real number such that $\omega(G^c)\leq \vartheta(G)\leq \chi(G^c)$~\cite{grotschel1981ellipsoid}.  
We denote $\psi(G)$ the integer nearest to $\vartheta(G)$. The value of $\psi(G)$ can be be determined in polynomial time ~\cite{grotschel1981ellipsoid}.

For Type I graphs, $T_t(G)=\omega(G)$. Since Eq.~(\ref{eq:bound1}) implies that $\omega(G) \leq \chi(G)\leq T_t(G)$, we have $\omega(G)=\chi(G)=T_t(G)=\psi(G^c)$.

For Type II graphs, $T_t(G) = is(G)+1$.
For any vertex $v \in V(G)$,
$\omega(G^c[N(v)]) \leq \psi(G[N(v)]) \leq \chi(G^c[N(v)])$, and by Lemma~\ref{eq:bound3}, $T_t(G)\geq\psi(G[N(v)])+1$. 
Since $T_t(G)=is(G)+1$, by Lemma~\ref{eq:bound3} there is a vertex $u \in V(G)$ such that $T_t(G)=\omega(G^c[N(u)])+1$. 
In this case, $\omega(G^c[N(u)])+1 = \chi(G^c[N(u)])+1$, and we determine $\omega(G^c[N(u)])$ using $\psi(G^c[N(u)])$.
Therefore, $\displaystyle T_t(G)\!=\!\!\max_{v \in V(G)}\{\psi(G[N(v)])\}\!+\!1$.

The same method used to determine $T_t(G)$ for Type II graphs can be applied for good tessellable graphs in order to determine $T(G)$, where $\displaystyle T(G)=\max_{v \in V(G)}\{\psi(G[N(v)])\}$. 

\subsubsection{Hardness results.}

As presented in Section~\ref{sec:three}, $(\Delta+1)$-\textsc{total tessellability} is $\mathcal{NP}$-complete for bipartite graphs, which have $is(G)+1=\Delta+1$ and $\omega(G)=2$.
Now, we show that \textsc{$k$-total tessellability} is $\mathcal{NP}$-complete for the following cases: line graph of triangle-free graphs with $k=\omega(G) \geq 9$ and $is(G)+1=3$; universal graphs with $k$ very far apart from both $is(G)+1$ and $\omega(G)$; planar graphs with $k=4=\omega(G)=is(G)+1$; and $(2,1)$-chordal graphs with $k=is(G)+1=\omega(G)+3$. 




\subsubsection*{Line graph of triangle-free graphs.}


Machado et al.~\cite{machado2010chromatic} proved that $k$-\textsc{edge colorability} is $\mathcal{NP}$-complete for $3$-colorable $k$-regular triangle-free graphs if $k \geq 3$.
The key idea of the proof of Theorem~\ref{teo:line} is to verify that $T_t(L(G))=\chi'(G)$ when $k\geq 9$.
The edges incident to any vertex $v$ of graph $G$ correspond to a clique of $L(G)$, whose size is the degree of $v$. 
If two vertices of $G$ are non-adjacent, then the corresponding cliques in $L(G)$ share no vertices. 
Hence, we cover the edges of the cliques of $L(G)$ incident to the vertices of each of the three color class of the $3$-coloring of $G$ with a tessellation related to the color class because these cliques share no vertices.
Therefore, 
since $T(L(G))=3$ and $\chi(L(G)) \geq 9 \geq 3T(G)$, by Lemma~\ref{eq:nova1}, $T_t(L(G))=\chi(L(G))=\chi'(G)$.
Note that in this case $k=\omega(L(G))$ and $is(L(G))+1=3$.

\begin{theorem}
\label{teo:line}
\textsc{$k$-total tessellability} is $\mathcal{NP}$-complete for line graphs $L(G)$ of $3$-colorable $k$-regular triangle-free graphs $G$ for any $k\geq 9$.
\end{theorem}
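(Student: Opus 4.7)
The plan is to reduce from $k$-\textsc{edge colorability} on $3$-colorable $k$-regular triangle-free graphs, which is $\mathcal{NP}$-complete for $k \geq 3$ by Machado et al.~\cite{machado2010chromatic}. Given such an input $G$ with $k \geq 9$, the reduction outputs $L(G)$ in polynomial time, and I would prove the target equivalence $T_t(L(G)) \leq k \iff \chi'(G) \leq k$ by establishing the stronger identity $T_t(L(G)) = \chi'(G)$. Membership in $\mathcal{NP}$ is immediate: a total tessellation cover is itself a polynomial-size certificate whose validity (proper vertex coloring, tile partition into cliques per label, edge coverage, and the compatibility condition $f(u) \notin h(uv)$) can be checked in polynomial time.

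To prove the identity I would invoke Lemma~\ref{eq:nova1}, whose hypothesis $\chi(L(G)) \geq 3\,T(L(G))$ splits into two parts. The chromatic bound is immediate: $\chi(L(G)) = \chi'(G) \geq \Delta(G) = k \geq 9$. For the tessellation bound $T(L(G)) \leq 3$, I would use the promised proper $3$-coloring $(V_1,V_2,V_3)$ of $G$ and, for each $i \in \{1,2,3\}$, build a tessellation $\mathcal{T}_i$ of $L(G)$ whose non-trivial tiles are the neighborhood cliques $K_v = \{e \in E(G) : v \in e\}$ for $v \in V_i$, completed with singletons for edges of $G$ with no endpoint in $V_i$. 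Triangle-freeness of $G$ guarantees that each $K_v$ is a clique in $L(G)$, and the stability of $V_i$ in $G$ guarantees that the $K_v$ for $v\in V_i$ are pairwise vertex-disjoint, so $\mathcal{T}_i$ is a valid tessellation. Any edge of $L(G)$ comes from two edges of $G$ sharing some endpoint $w$, which lies in some $V_i$, and is therefore covered by $\mathcal{T}_i$; hence $T(L(G)) \leq 3$. With both parts established, Lemma~\ref{eq:nova1} delivers $T_t(L(G)) = \chi(L(G)) = \chi'(G)$, which completes the reduction.

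The main obstacle I expect is executing the $3$-tessellation construction carefully: verifying that the tiles within each $\mathcal{T}_i$ really partition $V(L(G))$ into cliques and that the three tessellations together cover every edge of $L(G)$. Both checks rely critically on $G$ being triangle-free (so that neighborhoods are stable sets, making each $K_v$ a clique of $L(G)$) and on the color classes $V_i$ being independent in $G$ (so that the $K_v$ sharing a color index do not overlap). The role of the hypothesis $k \geq 9$ is solely to make $\chi(L(G)) \geq k \geq 3\,T(L(G))$, ensuring the premise of Lemma~\ref{eq:nova1} fires; everything else is routine.
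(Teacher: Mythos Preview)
Your proposal is correct and follows exactly the paper's approach: reduce from $k$-\textsc{edge colorability} on $3$-colorable $k$-regular triangle-free graphs, use the $3$-coloring of $G$ to exhibit a $3$-tessellation cover of $L(G)$, and then invoke Lemma~\ref{eq:nova1} with $\chi(L(G))=\chi'(G)\ge k\ge 9\ge 3\,T(L(G))$ to conclude $T_t(L(G))=\chi'(G)$. One harmless misattribution: each $K_v$ is a clique of $L(G)$ by the definition of line graph, not by triangle-freeness of $G$; the triangle-free hypothesis is only inherited from the hardness source and plays no role in your construction.
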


\subsubsection*{Universal graphs.}

Abreu et al.~\cite{ArTCSGood} reduced $q$-\textsc{colorability} to $k$-\textsc{tessellabi\\lity} for universal graphs. We present a similar argument to establish the $\mathcal{NP}$-completeness of  \textsc{$k$-total tessellability} for universal graphs.  Let $G$ be an instance of $q$-\textsc{colorability}.
The key idea of the proof of Theorem~\ref{teo:univer} is to add to $G^c$ a universal vertex $u$ and $2|V(G)|$ pendant vertices adjacent to $u$, which defines the graph [$2|V(G)|, G^c$] of $G^c$. Now, the total tessellation cover number of the constructed graph is given by $2|V(G)|+\chi(G)+1$, using  
labels $1, \ldots, \chi(G)$ to cover the edges incident to $u$ that belong to the subgraph induced by $V(G^c\cup \{u\})$, labels $\chi(G)+1,\ldots, \chi(G)+2|V(G)|$ to cover the edges incident to the pendant vertices and labels $\chi(G)+1,\ldots, \chi(G)+|V(G)|$ are enough to cover the edges of $G^c$; assign to $u$ the color $2|V(G)| +\chi(G)+1$, to the pendant vertices color $1$, and to the remaining vertices colors  $\chi(G)+|V(G)|+1, \ldots, \chi(G)+2|V(G)|$.  
The minimality follows from Lemma~\ref{eq:bound3}. 
Therefore,  $T_t([2|V(G)|,G^c]) =  2|V(G)|+\chi(G)+1$.

Note that $is(C_5 \vee \{u\})=2$, $T_t(C_5 \vee \{u\})=4$, and any  minimum total tessellation cover of $C_5 \vee \{u\}$ has at least three labels assigned to the edges incident to $u$ and a fourth label assigned to $u$.
Thus, $T_t([2|V(G)|,G^c\cup C_5])=T_t([2|V(G)|,G^c])+3$;  $is([2|V(G)|,G^c\cup C_5])=is([2|V(G)|,G^c])+2$; and\linebreak   $\omega([2|V(G)|,G^c\cup C_5])=\omega([2|V(G)|,G^c])$.
Therefore, each addition of a $C_5$ increases the gap between the total tessellation cover number and both the sizes of a maximum induced star and a maximum clique.
As long as the number of the $C_5$'s is polynomially bounded by the size of $G$, \textsc{$k$-total tessellability} is $\mathcal{NP}$-complete even if $k$ is far apart from $is(G)$ and $\omega(G)$.

\begin{theorem}
\label{teo:univer}
\textsc{$k$-total tessellability} is $\mathcal{NP}$-complete for universal graphs.
\end{theorem}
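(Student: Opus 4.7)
The plan is to give a polynomial-time reduction from $q$-\textsc{colorability} to \textsc{$k$-total tessellability} on universal graphs, following the construction sketched before the statement. Given an instance $G$ of $q$-\textsc{colorability}, I would build $H := [2|V(G)|, G^c]$ by taking the complement $G^c$, attaching a universal vertex $u$ adjacent to every vertex of $G^c$, and then attaching $2|V(G)|$ pendant vertices to $u$. Since $u$ is adjacent to every other vertex, $H$ is a universal graph, and the construction is clearly polynomial. The whole argument reduces to establishing that $T_t(H) = 2|V(G)| + \chi(G) + 1$, so that $G$ admits a proper $q$-coloring if and only if $T_t(H) \leq 2|V(G)| + q + 1$; this is the desired reduction since $q$-\textsc{colorability} is $\mathcal{NP}$-complete for $q \geq 3$.

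For the upper bound I would exhibit the explicit cover outlined above. Fix a proper $\chi(G)$-coloring of $G$ with color classes $C_1,\ldots,C_{\chi(G)}$ and a proper edge coloring of $G^c$ with at most $|V(G)|$ colors, which exists by Vizing's theorem. Set $f(u) = 2|V(G)|+\chi(G)+1$, $f(p) = 1$ for every pendant $p$, and assign the $|V(G)|$ distinct colors $\chi(G)+|V(G)|+1,\ldots,\chi(G)+2|V(G)|$ to the vertices of $V(G^c)$. For each $i \leq \chi(G)$, assign label $i$ to every edge of the clique $\{u\}\cup C_i$, which is indeed a clique in $H$ since $C_i$ is independent in $G$, hence a clique in $G^c$, and $u$ is universal. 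For each $j \leq 2|V(G)|$, assign label $\chi(G)+j$ to the edge $up_j$, and for $j \leq |V(G)|$ additionally use label $\chi(G)+j$ to cover the $j$-th color class of the edge coloring of $G^c$, whose edges are vertex-disjoint from $\{u,p_j\}$. A routine check confirms that every edge of $H$ is covered, that each label class is a vertex-disjoint union of cliques in $H$, and that no edge carries the color of either endpoint.

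For the matching lower bound I would apply Lemma~\ref{eq:bound3} at $u$: the induced subgraph $H[N(u)]$ is the disjoint union of $G^c$ and $2|V(G)|$ isolated pendants, so its complement inside $N(u)$ is the join $G \vee K_{2|V(G)|}$, with chromatic number $\chi(G)+2|V(G)|$, yielding $T_t(H) \geq \chi(G)+2|V(G)|+1$ and closing the equality. The extension making $k$ far from $\omega(H)$ and $is(H)+1$ follows analogously by attaching a polynomial number $p$ of disjoint copies of $C_5$: the same lemma applied at $u$ in $[2|V(G)|, G^c \cup pC_5]$ shows that each $C_5$ contributes $\chi(\overline{C_5})=3$ to the lower bound while increasing $is$ by only $\alpha(C_5)=2$ and leaving $\omega$ unchanged. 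The main obstacle I expect is the bookkeeping in the upper bound, where a single label must simultaneously realize a clique through $u$ together with further cliques inside $V(G^c)$, while remaining compatible with the proper vertex coloring and not clashing with the labels reserved for pendant edges.
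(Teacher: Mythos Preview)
Your proposal is correct and follows essentially the same argument as the paper: the same construction $H=[2|V(G)|,G^c]$, the same explicit $(2|V(G)|+\chi(G)+1)$-total tessellation cover for the upper bound (with the labels partitioned exactly as in the text preceding the theorem), and the same invocation of Lemma~\ref{eq:bound3} at the universal vertex for the matching lower bound, including the $C_5$-attachment trick for the gap. Your write-up is in fact slightly more explicit than the paper's sketch (naming Vizing for the edge coloring of $G^c$ and computing $H^c[N(u)]\cong G\vee K_{2|V(G)|}$), but there is no substantive difference.
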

\subsubsection*{Planar graphs.}

We show that \textsc{$4$-total tessellability} is $\mathcal{NP}$-complete when restricted to planar graphs $G$ with $is(G)+1=\omega(G)=4$. 
We present a polynomial transformation from \textsc{$3$-colorability} when restricted to planar graphs with maximum degree four~\cite{ArGarey} to \textsc{$4$-total tessellability} for planar graphs.
Let $G$ be an instance of such coloring problem.
$G'= G\vee\{u\}$ has a $4$-coloring if and only if the planar graph $G$ has a $3$-coloring.
We define three gadgets as depicted in Fig.~\ref{fig:gadgets}.
The edges of the external triangles of the \textit{Duplicator Gadget} are tiles of size three in a same tessellation.
The edges of the external triangles of the \textit{NotEqual Gadget} are tiles of size three in different tessellations.
The \textit{Shifter Gadget} forces triangles $T_1$ and $T_4$ to be tiles on a tessellation $a$, and triangles $T_2$ and $T_3$ to be tiles on a tessellation $b$ different from $a$.

\begin{figure}
\centering
     \includegraphics[scale=0.25]{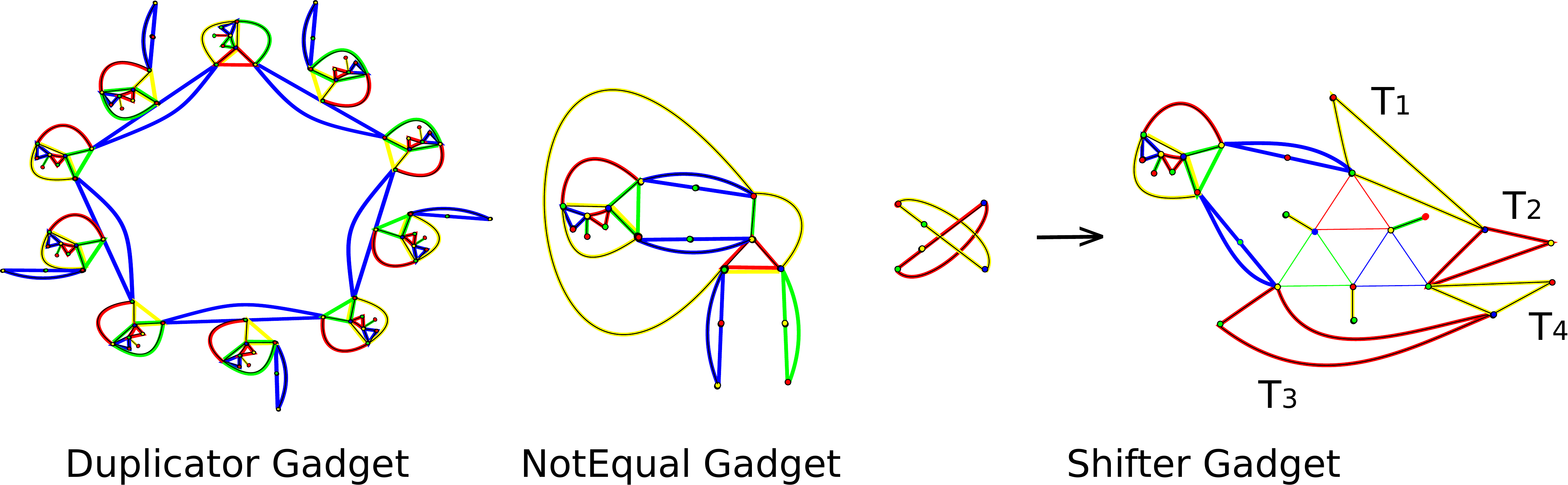}
     \caption{Duplicator Gadget, NotEqual Gadget, and Shifter Gadget.\label{fig:gadgets}}
\end{figure}

Each vertex $v$ of $G'$ is associated with a Duplicator Gadget such that the number of external triangles of the Duplicator Gadget of $v$ is equal to $d_{G'}(v)$. 
If two vertices of $G'$ are adjacent, we connect one external triangle of each Duplicator Gadget with a NotEqual Gadget. 
Thus, in a $4$-total tessellation cover of the obtained graph $H$, the labels of the external triangles of the Duplicator Gadget associated with a vertex $v$ are equal to the color of $v$ in a $4$-coloring of $G'$.
Now, we transform $H$ into a planar graph $H'$ by replacing each crossing triangles of $H$ by a Shifter Gadget.
Therefore, the planar graph $H'$ has a $4$-total tessellation cover if and only if $G$ has a $3$-coloring.
Note that in this case $k=\omega(G)=4=is(G)+1$.

\begin{theorem}
\textsc{$4$-total tessellability} is $\mathcal{NP}$-complete for planar graphs.
\label{teo:planar}
\end{theorem}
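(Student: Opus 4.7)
The plan is to prove $\mathcal{NP}$-completeness by a polynomial reduction from \textsc{$3$-colorability} restricted to planar graphs of maximum degree four, following the outline sketched above. Membership in $\mathcal{NP}$ is immediate: given a candidate vertex coloring $f$ and edge labeling $h$ on a graph $H'$, one verifies in polynomial time that $f$ is a proper coloring, that each label class of $h$ induces a tessellation, and that $f(u),f(v)\notin h(uv)$ on every edge.

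For hardness, starting from an instance $G$ of planar $3$-colorability with $\Delta(G)\le 4$, I would first pass to $G'=G\vee\{u\}$, so that a proper $3$-coloring of $G$ corresponds exactly to a proper $4$-coloring of $G'$ in which $u$ receives the distinguished fourth color. The reduction then builds a planar graph $H'$ from $G'$ by: (i) replacing each vertex $v$ with a Duplicator Gadget carrying $d_{G'}(v)$ external triangles, one per incident edge; (ii) linking, for each edge $vw\in E(G')$, one external triangle of the Duplicator of $v$ with one of $w$ via a NotEqual Gadget; and (iii) removing each crossing in a plane drawing of the resulting intermediate graph $H$ by substituting a Shifter Gadget, which transports tessellation labels across the crossing. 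Since $G'$ has polynomially bounded size and the number of crossings in a natural drawing of $H$ is polynomial, $H'$ is produced in polynomial time and is planar by construction.

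For correctness, in the forward direction a proper $4$-coloring $f'$ of $G'$ induces a $4$-total tessellation cover of $H'$ by giving all external triangles of the Duplicator of $v$ the tessellation label $f'(v)$, and extending the labels through the NotEqual and Shifter Gadgets in accordance with their designed behavior described above, with the interior vertex colors determined locally. In the reverse direction, the Duplicator property forces a common tessellation label on all external triangles of each Duplicator, yielding a well-defined map $f'\colon V(G')\to\{1,2,3,4\}$; the NotEqual property then enforces $f'(v)\ne f'(w)$ for every $vw\in E(G')$, so restricting $f'$ to $V(G)$ (with $f'(u)$ providing the reserved fourth color) recovers a proper $3$-coloring of $G$.

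The main obstacle is the detailed case analysis verifying the three gadget properties under the constraints of a $4$-total tessellation cover: namely, that in any such cover the external triangles of a Duplicator must share a single tessellation label, that a NotEqual forces distinct labels on its two external triangles, and that a Shifter preserves the external labels while eliminating the crossing. Once these local lemmas are established, the global reduction is straightforward. A secondary but necessary parameter check is that $\omega(H')=is(H')+1=4$, ensuring that no gadget contains a $K_5$ and no gadget vertex is the center of an induced $K_{1,4}$; this pins the complexity result to $k=4$ matching both lower bounds from Section~\ref{sec:three}, so that the hardness cannot be circumvented by the polynomial algorithms available for good total tessellable graphs.
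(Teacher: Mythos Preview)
Your proposal is correct and follows essentially the same approach as the paper: the same source problem (\textsc{$3$-colorability} for planar graphs with $\Delta\le 4$), the same universal-vertex trick $G'=G\vee\{u\}$, the same three gadgets (Duplicator, NotEqual, Shifter) used in the same roles, and the same two-direction correctness argument. The paper likewise defers the gadget case analyses to separate lemmas and records the parameter check $\omega(H')=is(H')+1=4$, so your identification of that local analysis as the remaining work matches exactly what the paper does.
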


\subsubsection*{$(2,1)$-chordal graphs.}

A graph $G$ is $(2,1)$ if its vertex set can be partitioned into two stable sets and one clique.
Since \textsc{$3$-edge colorability} is $\mathcal{NP}$-complete for $3$-regular graphs~\cite{machado2010chromatic}, \textsc{$3$-vertex colorability} for $4$-regular line graphs is also $\mathcal{NP}$-complete.
Let $G$ be a $4$-regular line graph.
We construct a graph $H$ from $G$ as follows. $V(H)$ contains a clique $\{e_0, \ldots, e_{|E(G)|-1}\}$ where each $e_i$, $0 \leq i \leq |E(G)|-1$, is associated with a distinct edge of $G$. $V(H)$ contains an stable set $\{e_0', \ldots, e_{|E(G)|-1}'\}$ such that each $e_i'$ is adjacent to all $e_j$ with $j\neq i$ and $j \neq i+1 \mod\ |E(G)|$.
$V(H)$ contains an stable set $\{v_0, \ldots, v_{|V(G)|-1}\}$, where each $v_i$, $0 \leq i \leq |V(G)|-1$, is associated with a distinct vertex of $G$. 
Each $e_j\in\{e_0, \ldots, e_{|E(G)|-1}\}$ is adjacent to vertices $v_r, v_s\in \{v_0, \ldots, v_{|V(G)|-1}\}$ such that $e_j = v_r v_s$.
$V(H)$ contains an stable set $P$ comprising $(|V(G)|+|E(G)|)(|E(G)|+1)$ pendant vertices such that each vertex of $\{v_0, \ldots, v_{|V(G)|-1}\}\cup\{e'_0, \ldots, e'_{|E(G)|-1}\}$ is adjacent to $|E(G)|+1$ pendant vertices.
By construction, $H$ is $(2,1)$ and chordal.

 We claim that $T_t(H)=|E(G)|+3$  if and only if $\chi(G)=3$. 
 Consider a $3$-coloring $c$ of $G$.
 Obtain a $k$-total tessellation cover of $H$ with $k=|E(G)|+3$ as follows.
 Assign colors in $\{1,\! \ldots,\! |E(G)|\}$ to the vertices of the clique $\{e_0,\! \ldots,\! e_{|E(G)|-1}\!\}$.
 Assign to vertex $e'_i$, for $1 \leq i \leq |E(G)|$, the same color of the vertex $e_i$.
 For  $0 \leq i \leq |E(G)|-1$, the tile with vertices $\{e'_i\} \cup
\{e_j\ |\ j \neq i \mbox{ and } j \neq i+1 \mod\ |E(G)|\}$ is in the
tessellation with label $i+2 \mod |E(G)|$. 
 Note that if two vertices $v_i$ and $v_k$ of $G$ are not adjacent, then the cliques $\{v_i\} \cup \{e_j\ |\ v_i \mbox{ is endpoint of } e_j \mbox{ in } G\}$ and $\{v_k\} \cup \{e_j\ |\ v_k \mbox{ is endpoint of } e_j \mbox{in } G\}$ are disjoint. 
 Thus, the tile with vertices $\{v_i\} \cup \{e_j\ |\ v_i \mbox{ is endpoint of } e_j \mbox{ in } G\}$ is in the tessellation with label $c(v_i)+|E(G)|$. 
 Finally, greedily assign colors and labels to the remaining vertices and edges of $H$.
 Consider a total tessellation cover of $H$ with $k=|E(G)|+3$ labels.
 Note that we require $|E(G)|$ tessellations to cover the edges between the vertices $\{e_0, \ldots, e_{|E(G)|-1}\} \cup \{e'_0, \ldots, e'_{|E(G)|-1}\}$ in any total tessellation cover of $H$. Moreover, a tile in each of those $|E(G)|$ tessellations contains $|E(G)|-2$ vertices of the clique $\{e_0, \ldots, e_{|E(G)|-1}\}$.
 Since each tile $\{v_i\} \cup \{e_j\ |\ v_i \mbox{ is endpoint of } e_j \mbox{ in } G\}$, for $0 \leq i \leq |V(G)|-1$, contains four vertices of the clique $\{e_0, \ldots, e_{|E(G)|-1}\}$, there are only three  tessellation labels used by the tiles $\{v_i\} \cup \{e_j\ |\ v_i \mbox{ is endpoint of } e_j \mbox{ in } G\}$, for $0 \leq i \leq |V(G)|-1$.
 Moreover, if two vertices $v_i$ and $v_k$ are adjacent in $G$, then the tiles $\{v_i\} \cup \{e_j\ |\ v_i \mbox{ is endpoint of } e_j \mbox{ in } G\}$ and $\{v_k\} \cup \{e_j\ |\ v_k \mbox{ is endpoint of } e_j\mbox{ in } G\}$ share a vertex $e_j=v_iv_k$ in $H$ and they  are tiles belonging to different tessellations.
 Hence, we obtain a $3$-coloring $c$ of $G$ as follows.
 Assign the label of the tile $\{v_i\} \cup \{e_j\ |\ v_i \mbox{ is endpoint of } e_j \mbox{ in } G\}$ to the color of $v_i$ in $c$.

Therefore, $G$ has a $3$-coloring if and only if $H$ has a total tessellation cover with $|E(G)|+3$ labels. Note that  $k=is(H)+1=\omega(H)+3=|E(G)|+3$.

\begin{theorem}
\textsc{$k$-total tessellability} is $\mathcal{NP}$-complete for chordal graphs.
\label{teo:chordal}
\end{theorem}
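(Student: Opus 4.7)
\medskip
\noindent\textbf{Proof plan.}
The plan is to follow the scaffold already laid out in the text and to turn it into a complete $\mathcal{NP}$-completeness argument via a polynomial-time reduction from \textsc{$3$-colorability} restricted to $4$-regular line graphs, which is $\mathcal{NP}$-complete by the cited result of Machado et al.~\cite{machado2010chromatic}. Membership in $\mathcal{NP}$ is immediate: given a labeling of $V(H)$ and $E(H)$, one checks in polynomial time that the vertex labeling is proper, that each edge-label class induces a partition of $V$ into cliques, and that the compatibility condition $f(u),f(v)\notin h(uv)$ holds. It also needs to be checked that the constructed graph $H$ is indeed chordal: the only potential cycles of length $\geq 4$ without chords would have to use the stable sets $\{e'_i\}$, $\{v_i\}$, or $P$, and by the explicit adjacency rules of the construction (and the fact that $\{e_0,\ldots,e_{|E(G)|-1}\}$ is a clique) every such cycle is triangulated by an $e_j$. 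The partition $V(H)=\{e_0,\ldots,e_{|E(G)|-1}\}\,\cup\,(\{e'_i\}\cup P)\,\cup\,\{v_0,\ldots,v_{|V(G)|-1}\}$ (after a small split of $P$) exhibits $H$ as $(2,1)$.

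\medskip
\noindent\textbf{Forward direction.}
Given a proper $3$-coloring $c\colon V(G)\to\{1,2,3\}$, I would build an explicit $(|E(G)|+3)$-total tessellation cover of $H$ exactly as sketched: use labels $1,\ldots,|E(G)|$ to colour the clique $\{e_0,\ldots,e_{|E(G)|-1}\}$ and to form, for each $i$, the tile $\{e'_i\}\cup\{e_j : j\neq i, j\neq i+1\!\!\mod |E(G)|\}$ inside tessellation labeled $i+2\!\!\mod|E(G)|$; use the three extra labels $|E(G)|+1,|E(G)|+2,|E(G)|+3$ to carry, for each $v_i$, the tile $\{v_i\}\cup\{e_j : v_i\text{ is an endpoint of }e_j\}$ in the tessellation indexed by $c(v_i)+|E(G)|$ (well-defined because non-adjacent $v_i,v_k$ have disjoint such tiles, so they may share a label, while adjacent $v_i,v_k$ share the vertex $e_{v_iv_k}$ and receive different labels since $c(v_i)\neq c(v_k)$); and finally assign the $|E(G)|+1$ pendants of each high-degree vertex pairwise-distinct labels and tile them as singletons. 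A routine bookkeeping check confirms that the compatibility condition between vertex colors and incident edge labels can be satisfied greedily, using the $|E(G)|+3$ labels available.

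\medskip
\noindent\textbf{Backward direction (the main obstacle).}
This is where the real work lies. Suppose $H$ admits a total tessellation cover with $|E(G)|+3$ labels. I would first argue that the clique $K=\{e_0,\ldots,e_{|E(G)|-1}\}$, together with the $(|E(G)|-2)$-element neighbourhoods imposed by the $e'_i$'s, forces exactly $|E(G)|$ distinct tessellation labels to be dedicated to edges inside $K$, each tessellation containing a tile of $K$ of size $|E(G)|-2$ (because omitting more than two vertices from $K$ would leave an edge of $K$ uncovered). Next, the vertex $v_i$ is adjacent in $H$ to four vertices of $K$ (its incident edges in $G$), and the set $\{v_i\}\cup N_H(v_i)\cap K$ is a $5$-clique of $H$ that must appear as a single tile in some tessellation, forcing that tile's label to avoid the $|E(G)|$ labels used for the size-$(|E(G)|-2)$ tiles of $K$. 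Therefore only three tessellation labels are available for all of the $v_i$-tiles. The final observation is that if $v_iv_k\in E(G)$, then the $v_i$- and $v_k$-tiles share the vertex $e_{v_iv_k}\in K$ and thus must lie in different tessellations. Defining $c(v_i)$ to be the label of the $v_i$-tile yields a proper $3$-coloring of $G$. The pendant vertices are used throughout to prevent degenerate relabelings and to guarantee that the lower bound $T_t(H)\geq |E(G)|+3$ cannot be circumvented; here Lemma~\ref{eq:bound3} applied to a high-degree vertex like $e_0$ gives the needed lower bound $T_t(H)\geq is(H)+1=|E(G)|+3$. The hardest piece to make rigorous is the counting argument forcing exactly three labels for the $v_i$-tiles; the proof will hinge on the precise interlocking structure of the tiles in $K$ dictated by the $e'_i$'s together with the compatibility constraints induced by the pendants.
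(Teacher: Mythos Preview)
Your proposal follows the paper's argument essentially step for step: the same reduction from \textsc{$3$-colorability} of $4$-regular line graphs, the same forward construction, and the same backward mechanism (the $|E(G)|$ ``big'' tiles through the $e'_i$'s consume $|E(G)|$ labels, leaving exactly three for the $v_i$-tiles, which then encode a proper $3$-coloring). One point to correct: your parenthetical justification ``because omitting more than two vertices from $K$ would leave an edge of $K$ uncovered'' is not the actual reason the $e'_i$-tile must contain exactly $|E(G)|-2$ vertices of $K$; the real reason---which you yourself identify later---is that the $|E(G)|+1$ pendants at each $e'_i$ (and at each $v_i$) already consume $|E(G)|+1$ edge-labels plus one vertex-label at that vertex, forcing all remaining incident edges into a single tile, and this pendant-forcing is what you should invoke there rather than an edge-coverage argument.
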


\section{The total staggered quantum walk model}\label{sec:two}

We now show how to simulate a total staggered quantum walk on a graph $G$ with a staggered quantum walk on its total graph $\text{Tot}(G)$.
The \textit{total graph} $\text{Tot}(G)$ of $G$ has $V(\text{Tot}(G))=V(G) \cup E(G)$ and $E(\text{Tot}(G))= E(G) \cup \{u\ uw\ |\ u \in V(G) \mbox{, } uw \in E(G)\} \cup \{uv\ vw\ |\ uv \in E(G) \mbox{ and } vw \in E(G)\}$.
Let $A=\text{Tot}(G)$, $A[E(G)] = Y$ and $A[V(G)] = X$. Subgraph $Y$ is isomorphic to the line graph $L(G)$ of $G$, and $X$ is isomorphic to the original $G$. 
We define the clique $K_v = \{v\} \cup \{vw\ |\ vw \in E(G)\}$ of $A$.

Consider a total tessellation cover of a graph $G$.
Define an associated tessellation cover of $A$ as follows. 
Assign the labels of the edges of $G$ to the respective edges of $X$ and assign the color of each vertex $v$ of $G$ to the edges of $A[K_v]$. 
We simulate the total staggered quantum walk on $G$ with the staggered quantum walk on $A$ by considering the vertices of $G$ as the corresponding vertices of $X$ in $A$, and the edges of $G$ as the corresponding vertices of $Y$ in $A$.
Fig.~\ref{fig:fignew8} depicts a total tessellation cover of a graph $G$ and the associated tessellation cover of $A=\text{Tot}(G)$.

Consider the walker located on a vertex $a$ of $G$.
If we apply the operator $H_j$ associated with the color of $a$, the walker hops to the edges incident to $a$ (the edges $ab$ and $ac$).
If we apply an operator associated with the label of an edge incident to $a$, the walker hops to the vertices in the tile of the tessellation of the same label that contains $a$ (the vertices $b$ and $c$). 
The same happens by considering the walker located on a vertex $a$ in $X$. 
If we apply the operator $H_j$ associated with the labels of the edges of $A[K_a]$, the walker hops to the vertices $ab$ and $ac$ of $Y$, and if we apply the operator associated with the label of an edge of $X$ incident to $a$, the walker hops to the vertices $b$ and $c$ of $X$.
Consider the walker located on an edge $ab$ of $G$. If we apply the operator associated with the color of $a$ (or $b$), the walker hops to $a$ (or $b$) and to the edges incident to it. 
The same happens by considering the walker located on a vertex $ab$ in $Y$. 
If we apply the operator associated with the labels of the edges of $A[K_a]$ (or $A[K_b]$), the walker hops to vertices of $K_a$ (or $K_b$).
Otherwise, the walker stays put in both $G$ and $A$.

\begin{figure}

\centering
     \includegraphics[scale=0.2]{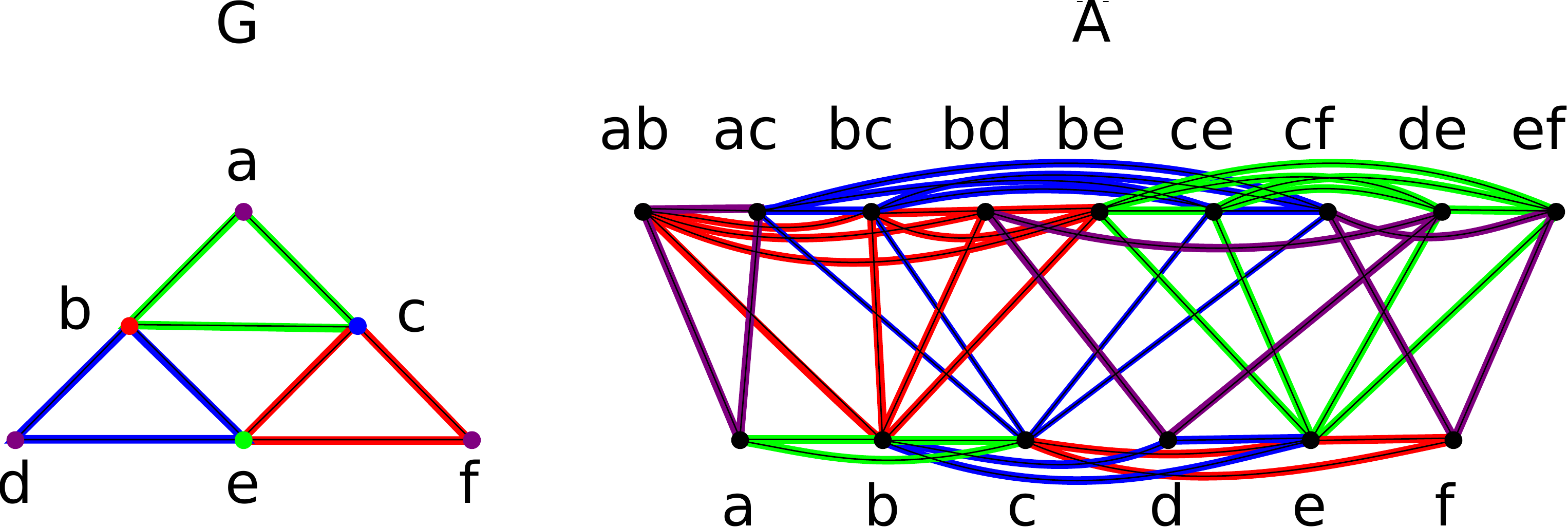}
     \caption{Total tessellation cover of a graph $G$ and the associated tessellation cover of $A$.\label{fig:fignew8}}
\end{figure}


\section{Concluding remarks}\label{sec:five}
We have defined the total tessellation cover on a graph $G$ and have used this concept to define the total staggered quantum walk model. This work strengthens the connection between quantum walk and graph coloring.


We have established examples of graphs for which $T_t(G)$ reaches the bounds of Section~\ref{sec:three}. 
We leave as an open problem to search for graphs with at least $3$ vertices satisfying $T_t(G)=3T(G)$ and $T_t(G) > \chi(G)$. 
Moreover, it would be interesting to define graph classes with $T_t(G)=T(G)=k$ for $k \geq 4$, since for $k=3$ the only such graphs are the odd cycles $C_n$ 
with $n\ \equiv\ 0\ \mod\ 3$.

We have shown that \textsc{$4$-total tessellability} is $\mathcal{NP}$-complete for planar graphs satisfying $is(G)+1=\omega(G) = 4$.
This is important since the hardness of \textsc{$k$-edge colorability} and \textsc{$k$-total colorability} for planar graphs are still open.
On the other hand, we know that planar graphs with large maximum degree have edge and total colorings as small as possible~\cite{artarestaplanar, arttotalplanar}.
We leave as an open problem
to find a threshold for $T_t(G)$ for which all planar graphs are Type~II.

Table~\ref{tab:concluding} summarizes the computational complexities of \textsc{edge-colorability} cf.~\cite{machado2010chromatic}, \textsc{total colorability} cf.~\cite{machadototal}, \textsc{tessellability} cf.~\cite{ArLatin},
and \textsc{total tessellability}. 
These four problems are in $\mathcal{P}$ when restricted to complete graphs, star graphs and trees, whereas for triangle-free graphs, the four problems are $\mathcal{NP}$-complete. 
We leave as an open problem to find  
a graph class for which \textsc{total colorability} is $\mathcal{NP}$-complete and \textsc{total tessellability} is in $\mathcal{P}$.
We have not identified this class because all known $\mathcal{NP}$-completeness proofs of \textsc{total colorability} are restricted to graph classes with $\chi_t(G)=T_t(G)$.


\begin{table}[!h]
{\fontsize{7}{10} \selectfont
    \centering
    \addtolength{\tabcolsep}{-0.8pt}{
    \begin{tabular}
    {|p{1.8cm}|p{.75cm}|p{.75cm}|p{0.3cm}|p{1.8cm}|p{.75cm}|p{.75cm}|p{0.3cm}|p{1.8cm}|p{.75cm}|p{.75cm}|}
    \cline{1-3} \cline{5-7} \cline{9-11}
     & $\chi'(G)$& $T(G)$& & & $\chi'(G)$& $\chi_t(G)$& & & $\chi'(G)$& $T_t(G)$\\
    \cline{1-3} \cline{5-7} \cline{9-11}
     $[2|V(G)|,G^c]$& $\mathcal{P}$ &$\mathcal{NP}$-c & & $G\cup K_{\Delta(G)+1}$, $\Delta$ even & $\mathcal{P}$ &$\mathcal{NP}$-c & & $[2|V(G)|,G^c]$& $\mathcal{P}$ &$\mathcal{NP}$-c  \\
     \cline{1-3} \cline{5-7} \cline{9-11}
     Line~of Bipartite& $\mathcal{NP}$-c & $\mathcal{P}$& & $G\cup K_{\Delta(G)+1}$, $\Delta$ odd & $\mathcal{NP}$-c & $\mathcal{P}$& & Line of Bipartite, $\omega(G) \geq 6$& $\mathcal{NP}$-c & $\mathcal{P}$\\
    \cline{1-3} \cline{5-7} \cline{9-11} 
     \end{tabular}\vspace{0.3cm}
          \begin{tabular}
    {|p{1.8cm}|p{.75cm}|p{.75cm}|p{0.3cm}|p{1.8cm}|p{.75cm}|p{.75cm}|p{0.3cm}|p{1.8cm}|p{.75cm}|p{.75cm}|}
    \cline{1-3} \cline{5-7} \cline{9-11}
  & $T(G)$& $X_t(G)$& & & $T(G)$& $T_t(G)$& & & $\chi_t(G)$& $T_t(G)$\\
    \cline{1-3} \cline{5-7} \cline{9-11}
     Bipartite& $\mathcal{P}$ &$\mathcal{NP}$-c & & Bipartite & $\mathcal{P}$ &$\mathcal{NP}$-c & & $G\cup K_{\Delta(G)+1}$, $\Delta$ odd& $\mathcal{P}$ &$\mathcal{NP}$-c \\
\cline{1-3} \cline{5-7} \cline{9-11}
     $[2|V(G)|,G^c]$& $\mathcal{NP}$-c & $\mathcal{P}$ & & $G\cup K_{3\Delta(G)}$ & $\mathcal{NP}$-c & $\mathcal{P}$& & Open & $\mathcal{NP}$-c & $\mathcal{P}$ \\
    \cline{1-3} \cline{5-7} \cline{9-11}
    \end{tabular}
    
    \caption{Computational complexities of parameters $\chi'(G), \chi_t(G), T(G)$, and $T_t(G)$. \label{tab:concluding}}
    }
}
\end{table}

\bibliographystyle{elsarticle-num}   
\bibliography{tessnumber}

\begin{thebibliography}{10}
\expandafter\ifx\csname url\endcsname\relax
  \def\url#1{\texttt{#1}}\fi
\expandafter\ifx\csname urlprefix\endcsname\relax\def\urlprefix{URL }\fi
\expandafter\ifx\csname href\endcsname\relax
  \def\href#1#2{#2} \def\path#1{#1}\fi

\bibitem{PSFG16}
R.~Portugal, R.~A.~M. Santos, T.~D. Fernandes, D.~N. Gon{\c{c}}alves, The
  staggered quantum walk model, Quantum Inf. Process. 15~(1) (2016) 85--101.

\bibitem{BkWest}
D.~West, Introduction to Graph Theory, Pearson, 2000.

\bibitem{portugal2013quantum}
R.~Portugal, Quantum Walks and Search Algorithms, Springer, 2013.

\bibitem{Ven12}
S.~Venegas-Andraca, Quantum walks: a comprehensive review, Quantum Inf.
  Process. 11~(5) (2012) 1015--1106.

\bibitem{ambainis2004quantum}
A.~Ambainis, Quantum search algorithms, ACM SIGACT News 35 (2004) 22--35.

\bibitem{Por16}
R.~Portugal, Establishing the equivalence between {S}zegedy's and coined
  quantum walks using the staggered model, Quantum Inf. Process. 15~(4) (2016)
  1387--1409.

\bibitem{Szegedy:2004}
M.~Szegedy, Quantum speed-up of {M}arkov chain based algorithms, Proceedings of
  the 45th Symposium on Foundations of Computer Science (2004) 32--41.

\bibitem{ArLatin}
A.~Abreu, L.~Cunha, T.~Fernandes, C.~de~Figueiredo, L.~Kowada, F.~Marquezino,
  D.~Posner, R.~Portugal, The graph tessellation cover number: extremal bounds,
  efficient algorithms and hardness, Proceedings of the 13th Latin American
  Theoretical Informatics Symposium, LNCS 10807 ({2018}) 1--13.

\bibitem{ArTCS1}
A.~Abreu, L.~Cunha, C.~de~Figueiredo, L.~Kowada, F.~Marquezino, D.~Posner,
  R.~Portugal, The graph tessellation cover number: chromatic bounds, efficient
  algorithms and hardness, Theor. Comput. Sci. 801 (2020) 175--191.

\bibitem{Kor97}
D.~P. Koreas, {The NP-completeness of chromatic index in triangle free graphs
  with maximum vertex of degree 3}, Appl. Math. Comput. 83~(1) (1997) 13--17.

\bibitem{ArCNMAC}
D.~Posner, C.~Silva, R.~Portugal, On the characterization of 3-tessellable
  graphs, Proceedings Series of the Brazilian Society of Computational and
  Applied Mathematics, CNMAC 6~(2) (2018) 1--7.

\bibitem{ArTCSGood}
A.~Abreu, L.~Cunha, C.~de~Figueiredo, L.~Kowada, F.~Marquezino, D.~Posner,
  R.~Portugal, The tessellation cover number of good tessellable graphs,
  Submitted to Theor. Comput. Sci. 
  arXiv:1908.10844 (2019) 1--14.

\bibitem{duchet1979representations}
P.~Duchet, Repr{\'e}sentations, noyaux en th{\'e}orie des graphes et
  hypergraphes, Ph.D. thesis, Univ. Paris 6 (1979).

\bibitem{mcdiarmid1994total}
C.~J. McDiarmid, A.~S{\'a}nchez-Arroyo, Total colouring regular bipartite
  graphs is {NP}-hard, Discrete Math. 124 (1994) 155--162.

\bibitem{ArSnark}
V.~F. Santos, D.~Sasaki, Total coloring of snarks is {NP}-complete,
  Matem\'{a}tica Contempor\^{a}nea 44 (2015) 1--10.

\bibitem{grotschel1981ellipsoid}
M.~Gr{\"o}tschel, L.~Lov{\'a}sz, A.~Schrijver, The ellipsoid method and its
  consequences in combinatorial optimization, Combinatorica 1~(2) (1981)
  169--197.

\bibitem{machado2010chromatic}
R.~C. Machado, C.~M. de~Figueiredo, K.~Vu{\v{s}}kovi{\'c}, Chromatic index of
  graphs with no cycle with a unique chord, Theor. Comput. Sci. 411~(7--9)
  (2010) 1221--1234.

\bibitem{ArGarey}
M.~R. Garey, D.~S. Johnson, Computers and Intractability: A Guide to the Theory
  of NP-Completeness, W. H. Freeman Co, 1979.

\bibitem{artarestaplanar}
R.~Cole, {\L}.~Kowalik, New {L}inear-{T}ime {A}lgorithms for {E}dge-{C}oloring
  {P}lanar {G}raphs, Algorithmica 50~(3) (2008) 351--368.

\bibitem{arttotalplanar}
L.~Bin, H.~Jianfeng, W.~Jianliang, L.~Guizhen, Total colorings and list total
  colorings of planar graphs without intersecting 4-cycles, Discrete Math.
  309~(20) (2009) 6035--6043.

\bibitem{machadototal}
R.~C. Machado, C.~M. de~Figueiredo, Total chromatic number of unichord-free
  graphs, Discrete Appl. Math. 159~(16) (2011) 1851--1864.

\end{thebibliography}


\newpage

\section*{Appendix}

\subsection*{Planar Graphs - Detailed proof of Theorem~\ref{teo:planar}}

The computational complexity of \textsc{total coloring} for planar graphs is an open problem. On the other hand, we show in this section that \textsc{$4$-total tessellability} for planar graphs is $\mathcal{NP}$-complete.
Since in this proof we use a generic graph $G$ such that $is(G)+1=\omega(G)=4$, we also prove that deciding whether a graph has both $T(G)=is(G)+1$ and $T(G)=\omega(G)$ is $\mathcal{NP}$-complete even if restricted to planar graphs.


\begin{lemma}
\label{lem:K4}
Let $G$ be the graph with $V(G) = \{a_1, b_1, c_1, d_1\}\cup\{a_2, b_2, c_2, d_2\}\cup\{a_3, b_3, c_3, d_3\}$, where $\{a_1, b_1, c_1, d_1\}$ is a maximal clique and $\{a_1, a_2, a_3\}$, $\{b_1, b_2, b_3\}$, $\{c_1, c_2, c_3\}$, and $\{d_1, d_2, d_3\}$  are triangles.
Any total tessellation cover of $G$ with four labels has the following property: The edges of three triangles are tiles on a same tessellation and the edges of the remaining triangle are a tile on a different tessellation. 
\end{lemma}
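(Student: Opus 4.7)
I will exploit the fact that every clique of $G$ containing a vertex $x_1$ (for $x\in\{a,b,c,d\}$) sits either inside the attached triangle $\{x_1,x_2,x_3\}$ or inside the central $K_4=\{a_1,b_1,c_1,d_1\}$, since $x_2$ and $x_3$ are adjacent only to $x_1$. I call these possibilities \emph{T-type} and \emph{K-type} tiles at $x_1$. Because $\{a_1,b_1,c_1,d_1\}$ is a $K_4$ the proper vertex coloring $f$ is a bijection on it, and after relabeling I may assume $f(a_1)=1, f(b_1)=2, f(c_1)=3, f(d_1)=4$; I write $\ell(i)$ for the letter $x$ with $f(x_1)=i$.

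\textbf{Step 1 (one triangle-tile per $x$).} Fix $x$ and let $t_x$ (resp.\ $k_x$) be the number of tessellations in which $x_1$'s tile is T-type containing some $x_j$, $j\in\{2,3\}$ (resp.\ K-type containing some $K_4$-neighbor of $x_1$). In tessellation $f(x_1)$ the vertex $x_1$ must be isolated, so the three $K_4$-edges incident to $x_1$ can only be covered by K-type tiles; each such tile contains at most two other $K_4$-vertices (the vertex whose color matches the tessellation's label is excluded), giving $2k_x\geq 3$ and hence $k_x\geq 2$. The two edges $x_1x_2, x_1x_3$ can only be covered by T-type tiles, so $t_x\geq 1$. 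Since the two types are mutually exclusive and there are only three tessellations other than $f(x_1)$, $t_x+k_x\leq 3$, forcing $t_x=1$ and $k_x=2$. The single T-type tile then must contain both $x_2$ and $x_3$ to cover both $x_1x_2$ and $x_1x_3$, and therefore equals the whole triangle $\{x_1,x_2,x_3\}$. Let $j_x$ denote the unique tessellation in which this triangle is a tile.

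\textbf{Step 2 (reduction to a $2$-subset cover).} Set $S_i:=\{x:j_x=i\}$, so $S_i\subseteq\{a,b,c,d\}\setminus\{\ell(i)\}$ and $\sum_i|S_i|=4$. Covering a $K_4$-edge $x_1y_1$ requires a tessellation $i$ in which $x_1$ and $y_1$ share a K-type tile, and this forces $i\neq f(x_1),f(y_1)$ together with $\{x,y\}\cap S_i=\emptyset$, since if $x\in S_i$ then $x_1$ lies in the triangle-tile $\{x_1,x_2,x_3\}$ of tessellation $i$. Putting $U_i:=\{a,b,c,d\}\setminus(S_i\cup\{\ell(i)\})$ with $|U_i|=3-|S_i|$, the condition becomes: every $2$-subset of $\{a,b,c,d\}$ is contained in some $U_i$.

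\textbf{Step 3 (only $(3,1,0,0)$ survives).} The quantity $\sum_i\binom{|U_i|}{2}$, evaluated over the partitions $(3,1,0,0), (2,2,0,0), (2,1,1,0), (1,1,1,1)$ of $4$ into parts at most $3$, equals $7, 6, 5, 4$ respectively; the last two already fall short of the six required $2$-subsets. For $(2,2,0,0)$ with $|S_i|=|S_j|=2$ and $|S_k|=|S_{k'}|=0$ where $\{k,k'\}=\{1,2,3,4\}\setminus\{i,j\}$, the $2$-subset $\{\ell(k),\ell(k')\}$ cannot be contained in any $U_p$: it is excluded from $U_k$ and $U_{k'}$ by the definition of $\ell$, and $U_i,U_j$ have size $1$. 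Hence $(|S_1|,|S_2|,|S_3|,|S_4|)$ is a permutation of $(3,1,0,0)$: one tessellation contains three of the four triangles $\{x_1,x_2,x_3\}$ as tiles and a different tessellation contains the fourth, which is precisely the claim. I expect the main obstacle to be the combinatorial abstraction in Step~2 that turns the local tile analysis at each $x_1$ into a clean 2-subset covering problem on $\{a,b,c,d\}$; once that reformulation is in place, Step~3 is immediate.
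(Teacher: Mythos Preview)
Your argument is correct. Step~1 is sound: the isolation of $x_1$ in tessellation $f(x_1)$, the dichotomy between T-type and K-type tiles (which follows because $x_2,x_3$ have no neighbors outside the triangle), the bound $2k_x\ge 3$ from the exclusion of $\ell(i)_1$ in tessellation $i$, and the forcing of $t_x=1$, $k_x=2$ all hold. The conclusion that the unique T-type tile must be the full triangle $\{x_1,x_2,x_3\}$ is then immediate. Step~2 correctly translates the covering of the six $K_4$-edges into the requirement that every $2$-subset of $\{a,b,c,d\}$ lie in some $U_i$, and Step~3 cleanly eliminates all multiset shapes except $(3,1,0,0)$; your ad hoc treatment of $(2,2,0,0)$ via the uncovered pair $\{\ell(k),\ell(k')\}$ is the right observation.

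By contrast, the paper's proof of this lemma consists solely of the sentence ``The proof follows after analyzing all possibilities of total tessellation covers with four labels,'' i.e.\ an unelaborated exhaustive check. Your route is genuinely different: rather than enumerating covers, you first pin down the local structure at each $x_1$ (Step~1) and then abstract the global constraint to a small set-covering problem on four letters (Steps~2--3). What this buys you is a short, verifiable argument that explains \emph{why} only the $3{+}1$ split can occur, whereas the paper's approach certifies the claim but gives no structural insight. The trade-off is that your proof requires the reader to accept the abstraction in Step~2, which you rightly flag as the main conceptual step; once granted, the remainder is a two-line counting argument.
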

\begin{proof}
The proof follows after analyzing all possibilities of total tessellation covers with four labels. 
\qed
\end{proof}

\begin{figure}
\centering
     \includegraphics[scale=0.45]{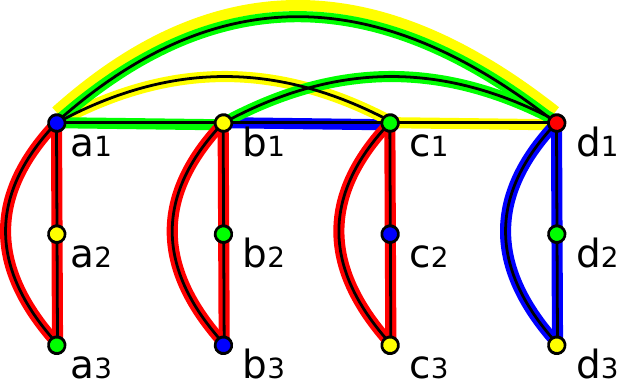}
     \caption{A $4$-total tessellation cover of the graph of Lemma~\ref{lem:K4}.\label{fig:fignew1}}
\end{figure}

Fig.~\ref{fig:fignew1} shows an example of a total tessellation cover of graph $G$ described in Lemma~\ref{lem:K4}. The edges of three triangles must have the same color (red) and the fourth triangle must have a different color (blue).
In the first gadget of the Fig.~\ref{fig:fignew2}, the two external triangles' edges must receive the same label, since the two internal triangles shares a vertex and they must receive different labels.
In the second gadget, since the internal triangles' edges must receive the same label the external triangle' edges must receive different labels.

\begin{figure}
\centering
     \includegraphics[scale=0.3]{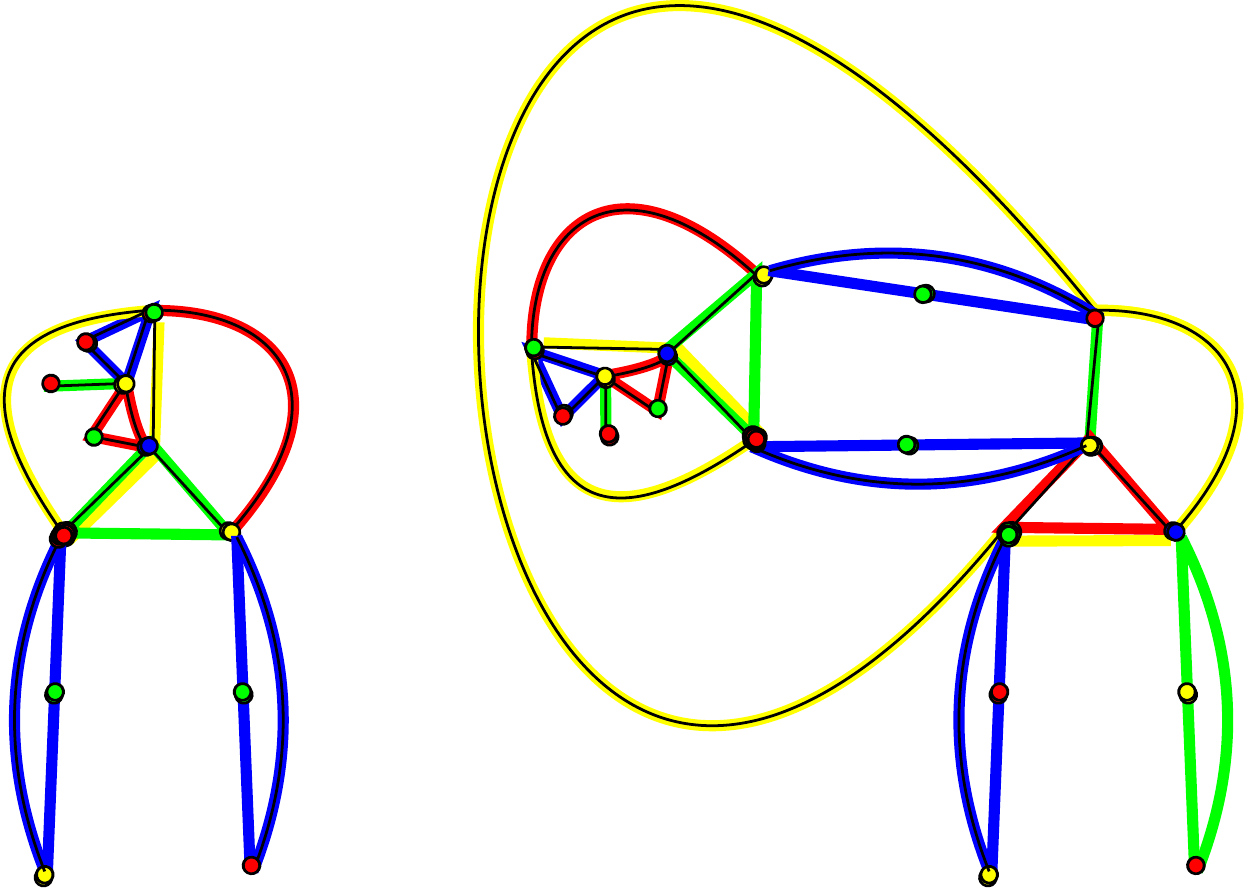}
     \caption{Equal Gadget: edges of its two external triangles are covered by $3$-tiles of a same tessellation in a $4$-total tessellation cover. NotEqual Gadget: edges of its two external triangle are covered by $3$-tiles of different tessellations in a $4$-total tessellation cover.\label{fig:fignew2}}
\end{figure}

\begin{figure}
\centering
     \includegraphics[scale=0.3]{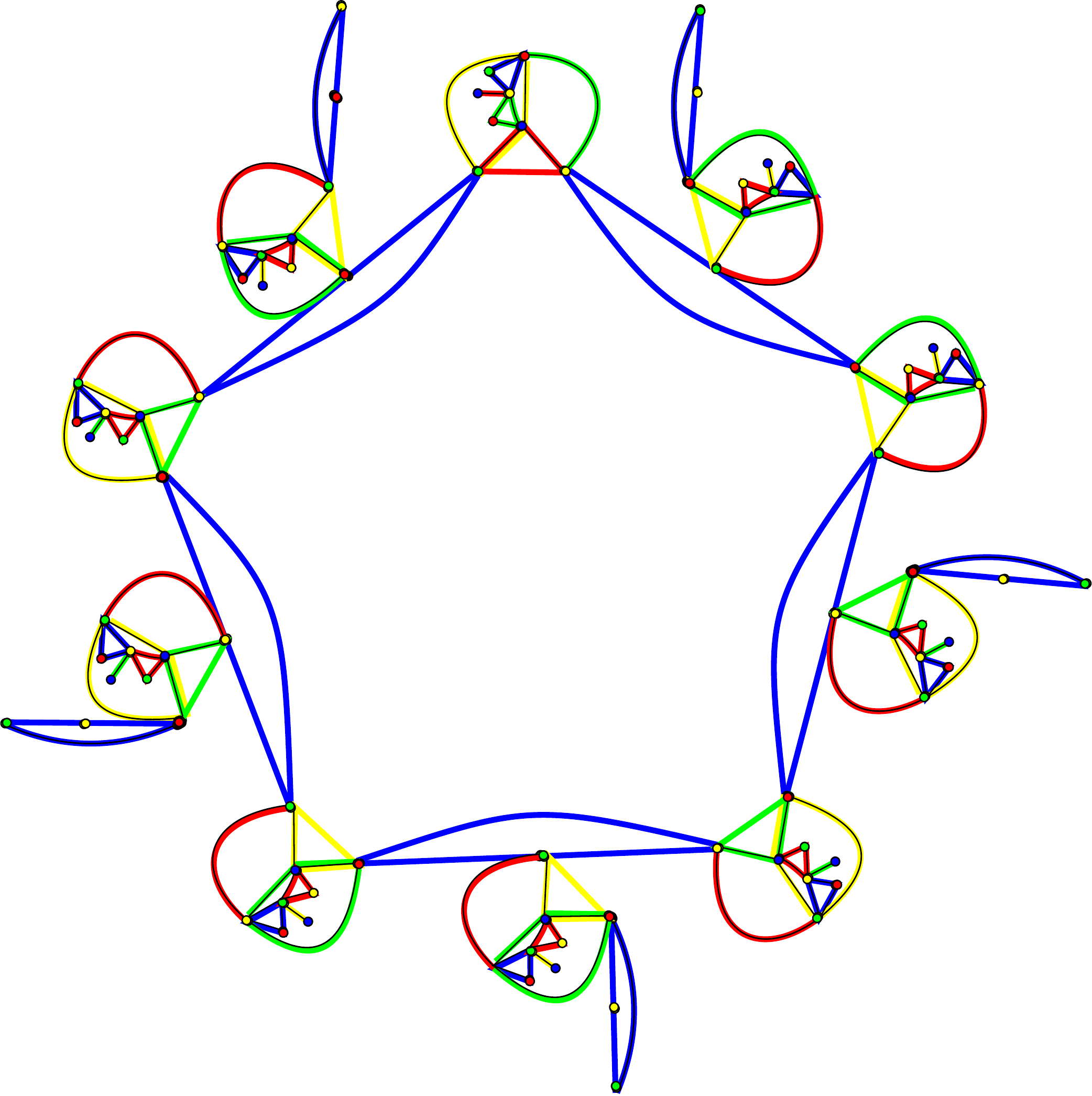}
     \caption{Duplicator Gadget: it forces the five external triangles' edges to have the same label. Moreover, if the label of the triangle' edges is $a$, then its vertices have the next two consecutive labels $a+1$ and $a+2$ modulo 4 available to the vertices of four of these five triangles in a $4$-total tessellation cover.\label{fig:fignew3}}
\end{figure}

\begin{figure}
\centering
     \includegraphics[scale=0.30]{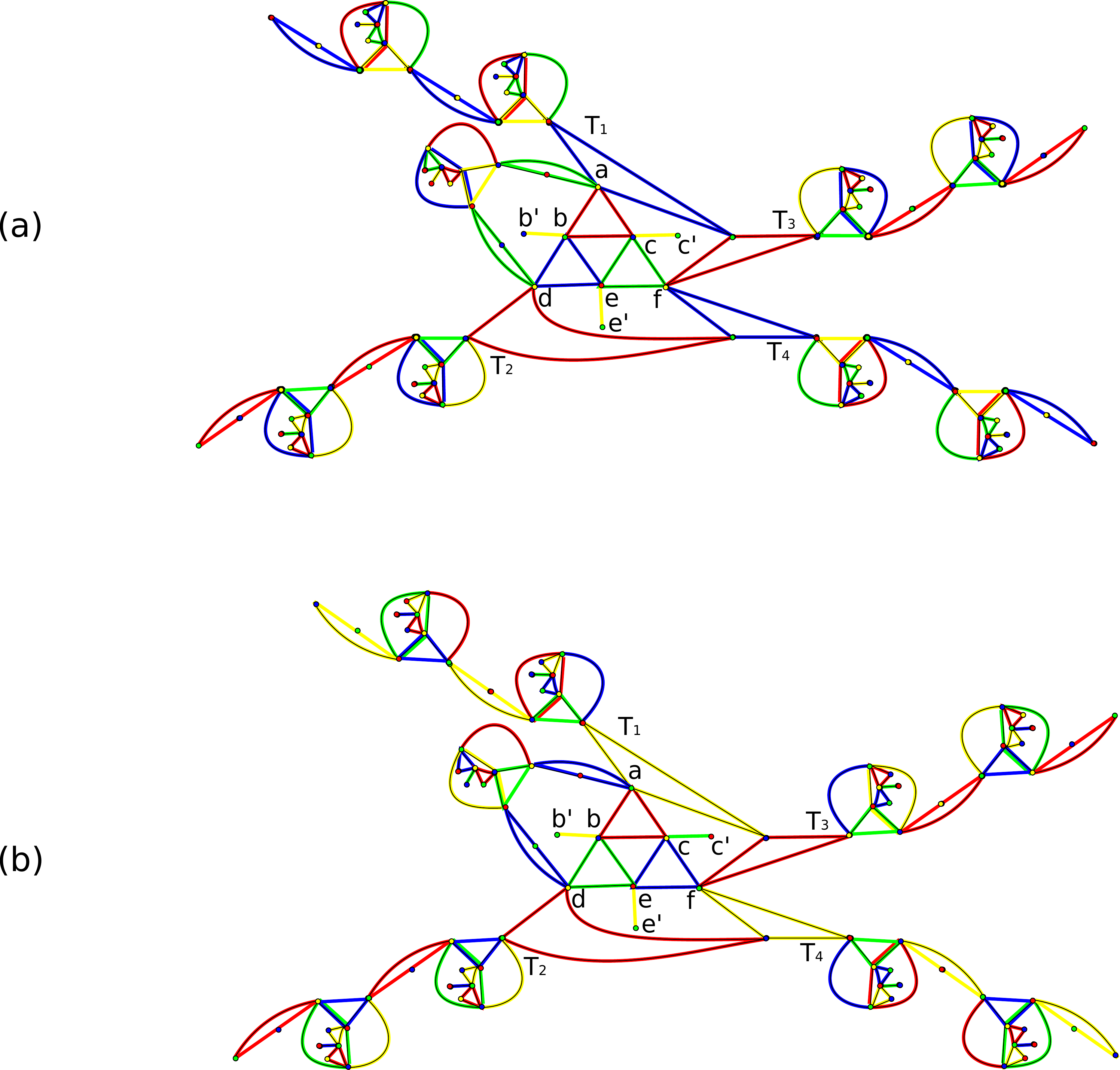}
     \caption{Shifter Gadget: it shifts two crossing tiles in different tessellations such that the tiles get from one side to the other side without crossing edges and maintaining their $3$-tiles tessellations in a $4$-total tessellation cover.\label{fig:fignew4}}
\end{figure}

\begin{lemma}
\label{lem:swift} 
Any total tessellation cover of the graph $G$ of Fig.~\ref{fig:fignew4}  with four labels has the following property: The triangles $T_1$ and $T_4$ are tiles in a same tessellation, and the triangles $T_2$ and $T_3$ are tiles in a same tessellation, which is different from the tessellation that contains $T_1$ and $T_4$.
\end{lemma}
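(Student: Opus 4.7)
The plan is to exploit the modular construction of the Shifter Gadget depicted in Fig.~\ref{fig:fignew4}, which is assembled from copies of the $K_4$-like substructure analyzed in Lemma~\ref{lem:K4} together with copies of the Equal and NotEqual Gadgets from Fig.~\ref{fig:fignew2}. First I would fix an arbitrary $4$-total tessellation cover of $G$ and, using the freedom to permute the four labels, normalize so that the tile covering the edges of $T_1$ lies in tessellation $a$ and the tile covering the edges of $T_2$ lies in tessellation $b$.

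Next I would propagate these labels through the interior of the gadget. At each internal $K_4$-like junction, Lemma~\ref{lem:K4} forces three of the four attached triangles to be tiles of a single common tessellation and the fourth to be a tile of a distinct tessellation; hence along an Equal Gadget the tessellation label of a triangle is preserved, while along a NotEqual Gadget it is forced to switch. Tracing the path of triangles from $T_1$ to $T_4$ inside the Shifter and checking from the figure that every consecutive pair on this path is glued by an Equal Gadget, I would conclude that $T_4$ also sits in tessellation $a$. The symmetric argument along the path from $T_2$ to $T_3$ places both of them in tessellation $b$. Finally, at the inner junction where the two chains visually ``cross,'' some $K_4$-like block contains a triangle from each chain; Lemma~\ref{lem:K4} applied to that block, together with the NotEqual connection built into it, forces $a \neq b$.

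The hard part will be a finite but tedious case analysis at each internal $K_4$-junction of the Shifter, showing that the ``distinguished'' triangle singled out by Lemma~\ref{lem:K4} is always one of the strictly internal triangles, never one of $T_1, T_2, T_3, T_4$. If one of the external triangles could play this distinguished role, one could conceivably cover $T_1$ and $T_2$ with the same tessellation, or split $\{T_1, T_4\}$ across two tessellations, and the whole shifting property would fail. This is the only step that genuinely depends on the specific planar drawing of the gadget rather than on the black-box behavior of its component gadgets; everything else is a straightforward application of Lemma~\ref{lem:K4} and of the Equal/NotEqual properties already recorded for Fig.~\ref{fig:fignew2}.
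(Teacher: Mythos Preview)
Your proposal rests on a structural picture of the Shifter Gadget that does not match Fig.~\ref{fig:fignew4}. You treat the gadget as two disjoint chains of Equal Gadgets, one from $T_1$ to $T_4$ and one from $T_2$ to $T_3$, meeting only at a single $K_4$-like block where a NotEqual connection separates the labels. But the core of the Shifter is not such a pair of chains: it is a Haj\'os subgraph on vertices $a,b,c,d,e,f$ (three triangles $\{a,b,c\}$, $\{b,d,e\}$, $\{c,e,f\}$ sharing vertices pairwise), together with a single Equal Gadget joining the triangles at $a$ and $d$. The external triangles satisfy $T_1\cap T_3\neq\emptyset$ and $T_2\cap T_4\neq\emptyset$, so there is no ``Equal-only'' path from $T_1$ to $T_4$ at all; if there were, the whole planarity obstruction the Shifter is designed to resolve would disappear. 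In particular, Lemma~\ref{lem:K4} is about a $K_4$ with four pendant triangles, and the central piece here is a Haj\'os graph, so the black-box application you describe does not even apply to the crossing region.

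The deeper gap is that your argument never uses the vertex-coloring part of the total tessellation cover, whereas this is exactly what drives the swap. The paper's proof proceeds by a case split on whether the colors $f(a)$ and $f(d)$ coincide; in each case the constraint that no edge incident to a vertex may carry that vertex's color, combined with the three mutually intersecting Haj\'os triangles at $b,c,e$, pins down the colors of $c$ and $e$ as the labels of the tiles $T_1$ and $T_2$ respectively. This in turn forces the triangle $\{c,e,f\}$ and the color of $f$ to avoid those two labels, so $T_3$ and $T_4$ are pushed back onto the labels of $T_1,T_2$, and the shared vertices $T_1\cap T_3$, $T_2\cap T_4$ then give the swap $T_1\leftrightarrow T_4$, $T_2\leftrightarrow T_3$. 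None of this can be recovered from the Equal/NotEqual behaviour alone: a NotEqual Gadget only says ``different label,'' which with four labels leaves three possibilities, so your chain argument would not even force $a\neq b$, let alone the precise pairing.
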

\begin{proof}
Since there are three maximal cliques incident to the vertices $b$ (resp. $c$ and $e$), they are three tiles on different tessellations.
Therefore, the three triangles of the Hajos subgraph of $G$ are tiles on different tessellations.
The Equal Gadget has its tiles incident to the vertices $a$ and $d$ on a same tessellation.

Assume that the color of the vertex $a$ is equal to the color of the vertex $d$.
Since the color of vertex $a$ is the same of the color of vertex $d$ and the tiles of the Equal Gadget have a same tessellation different from the label of the color of $a$ and $d$, it implies that $T_1$ is a tile on the same tessellation of the triangle with vertices $\{b, d, e\}$ and that $T_2$ is a tile on the same tessellation of the triangle with vertices $\{a, b, c\}$.
Note that the colors of the vertices $a$ (resp. $d$) and $b$ are different and that they are also different from the labels of $T_1$ and $T_2$. 
Therefore the color of the vertices $c$ and $e$ must be the same labels of the tessellations of $T_1$ and $T_2$.
Now, the triangle $\{c, e, f\}$ and the vertex $f$ must receive two labels different from the labels used by the triangles $\{a, b, c\}$ and $\{b, d, e\}$.
This implies that the triangles $T_3$ and $T_4$ are tiles with the same labels of the triangles $T_1$ and $T_2$.
Since $T_1$ share a vertex with $T_3$ and $T_2$ share a vertex with $T_4$, the triangles $T_1$ and $T_4$  are tiles on a same tessellation and the triangles $T_2$ and $T_4$ are also tiles on a same tessellation different from the tessellation of $T_1$ and $T_3$.
Therefore, if there is a total tessellation cover of $G$ with $4$ labels, the proof of the theorem holds.
A total tessellation cover of $G$ with $4$ labels is depicted in Fig.~\ref{fig:fignew4} (a).
Note that we obtain total tessellation covers of $G$ with all possible combinations of two distinct labels of the four labels for $T_1$ and $T_2$ by replacing the color classes of $G$ by the desired labels.

Assume that the color of the vertex $a$ is different from the color of the vertex $d$.
For the sake of contradiction assume we use only two labels to the colors of $a$, $d$ and the tiles of the triangles $\{a,b,c\}$ and $\{b,d,e\}$.
This implies that $b$ receives a third label different from these two, and that there is only one available label to the color of the vertices $c$ and $e$, a contradiction.
We also cannot use four different labels to the vertices $a$, $d$ and the triangles $\{a,b,c\}$ and $\{b,d,e\}$ or there would be no available label to the triangles of the Equal Gadget.
Therefore, we have three different labels used in the colors of the vertices $a$, $d$ and the labels of the tiles of the triangles $\{a,b,c\}$ and $\{b,d,e\}$.
This implies that the color of the vertex $b$ and the tiles of the Equal Gadget receive the same label.
The color of the vertices $a$, $b$, and the label of the tile of the triangle $\{a,b,c\}$ are different from the labels of $T_1$.
This implies that the color of $c$ is equal to the label of the tile of $T_1$.
The same holds for the vertex $e$ and the label of the tile of $T_2$.
Now the color of the vertex $f$ and the label of the tile of the triangle $\{c,e,f\}$ must be different from the colors of $c$ and $e$ (i.e., the label of the tiles of $T_1$ and $T_2$).
This implies that the label of the tiles $T_3$ and $T_4$ are the same labels of the tiles $T_1$ and $T_2$.
Since $T_1$ share a vertex with $T_3$ and $T_2$ share a vertex with $T_4$, the triangles $T_1$ and $T_4$  are tiles on a same tessellation and the triangles $T_2$ and $T_4$ are also tiles on a same tessellation different from the tessellation of $T_1$ and $T_3$.
Therefore, if there is a total tessellation cover of $G$ with $4$ labels, the proof of the theorem holds.
A total tessellation cover of $G$ with $4$ labels is depicted in Fig.~\ref{fig:fignew4} (b).
Note that we obtain total tessellation covers of $G$ with all possible combinations of two distinct labels of the four labels for $T_1$ and $T_2$ by replacing the color classes of $G$ by the desired labels.
\qed
\end{proof}

\begin{customthm}{3}
\textsc{$4$-total tessellability} is $\mathcal{NP}$-complete for planar graphs.
\end{customthm}
\begin{proof}
Let $G$ be an instance of \textsc{$3$-colorability} of planar graphs with degree at most four~\cite{ArGarey}.
Add a universal vertex $u$ to $G$ so that $G\vee\{u\}$ has a $4$-coloring if and only if $G$ has a $3$-coloring.
We create a planar graph $H$ from $G \vee \{u\}$ as follows.
We replace each vertex of $G \vee \{u\}$ by a Duplicator Gadget with the degree of the vertex duplication in $H$.
We replace each edge of $G\vee\{u\}$ by a NotEqual Gadget connecting the related triangles of the Duplicator Gadgets of the endpoints of the edge in $H$.
The only crossing edges in $H$ are from the triangles of the universal vertex and the triangles of the other Duplicator Gadget that have labels different from the one of the universal Gadget.
We  replace these crossing tiles with the Shifter Gadget.

We claim that the resulting planar graph $H$ has a $4$-total tessellation cover if and only if the graph $G$ has a $3$-coloring.

Consider a $4$-total tessellation cover of $H$. If two vertices are adjacent in $G\vee\{u\}$, then the NotEqual Gadget forces the tiles of the external triangles of the respective Duplicator Gadgets of these two adjacent vertices to be on different tessellations.
Therefore, we obtain a $4$-coloring of $G\vee\{u\}$ by assigning the color of a vertex as the label of the tile of the external triangles of the Duplicator Gadget related to that vertex.

Consider a $4$-coloring $f$ of $G \vee \{u\}$.
We obtain a $4$-total tessellation cover of $H$ as follows. 
Assign each tile of the external triangles of the Duplicator Gadget to the tessellation related to the color the vertex received in $f$.
Label the remaining vertices and edges as described in Figure~\ref{fig:fignew3} by rotating the color classes labels to obtain the desired label.

Since we obtain the total tessellation cover of the Duplicator Gadget by rotating the color classes, we have that the label of a external triangle and a vertex of the degree two is related to consecutive colors.
Therefore, if the label of the tile of the external triangle is $1$ (resp. $2$, $3$, and $4$), then there are two vertex of degree two in this external triangle with colors $2$ and $3$ (resp. $3$ and $4$, $4$ and $1$, $1$ and $2$). 
Now, for any two different tessellations of the tiles of the external triangles, we select one vertex of degree two of each so that we do not use all four labels in these two vertex and in the two tiles of the external triangles.
By Lemma~\ref{lem:K4}, there is a total tessellation cover with four labels of the NotEqual Gadget if we do not use all four labels on the two tiles of its external triangles and the two vertices of the $K_4$ of that external triangles.

We obtain a total tessellation cover with $4$ labels of the Shifter Gadgets as described in Lemma~\ref{lem:swift}.
Note that, as depicted in Fig.~\ref{fig:fignew4}, the two consecutive Equal Gadgets connected to the external triangles $T_1$ (resp. $T_2$, $T_3$, and $T_4$) allow us to assign colors to the vertices of the Shifter Gadgets so that the vertices of the last of their external triangles have the same colors of the vertices of the external triangles of the Duplicator Gadgets that they are related.
\qed
\end{proof}

\end{document}